\documentclass[twocolumn]{svjour3}
%\overfullhrule=0pt

\makeatletter
%% The "\@seccntformat" command is an auxiliary command
%% (see pp. 26f. of 'The LaTeX Companion,' 2nd. ed.)
\def\@seccntformat#1{\@ifundefined{#1@cntformat}%
   {\csname the#1\endcsname\quad}  % default
   {\csname #1@cntformat\endcsname}% enable individual control
}
\let\oldappendix\appendix %% save current definition of \appendix
\renewcommand\appendix{%
    \oldappendix
    \newcommand{\section@cntformat}{\appendixname~\thesection\quad}
}
\makeatother

\usepackage{amsmath,amssymb, amsthm}
\usepackage{stmaryrd}
\usepackage[utf8]{inputenc}
\usepackage{algorithmicx,xpatch}
\usepackage{algorithm}
\usepackage{algpseudocode}
\usepackage{pgf-pie}
\usepackage{listings}
\usepackage{tikz}
\usepackage{setspace}
\usepackage{multirow}
\usepackage{url}
\usetikzlibrary{arrows,shapes,positioning,shadows,trees}
\usepackage{caption}
\usepackage{moresize}

%\doublespacing

% \newtheorem{definition}{Definition}
% \newtheorem{proposition}{Proposition}
% \newtheorem{corollary}{Corollary}
% \newtheorem{example}{Example}
% \newtheorem{proof}{Proof}
% \newtheorem{lemma}{Lemma}

%\captionsetup[algorithm]{font=small}

\title{Efficient and secure modular operations using the Adapted Modular Number System}
\titlerunning{Efficient and secure modular operations using AMNS}

\begin{document}
\newcommand\comment[1]{{\color{black} #1}}

\author{Laurent-St\'ephane Didier \and
Fangan-Yssouf Dosso  \and
           Pascal V\'eron
}

%\authorrunning{Short form of author list} % if too long for running head

\institute{L.-S. Didier, F.-Y. Dosso, P. V\'eron \at
              Institut de
  Math\'ematiques de Toulon\\ Universit\'e de Toulon, France \\
              \email{didier@univ-tln.fr}\\
              \email{dosso@univ-tln.fr}\\
              \email{veron@univ-tln.fr}
}

%
%
%\author{\IEEEauthorblockN{Fabien Herbaut} \IEEEauthorblockA{Institut
% de Math\'ematiques de Toulon \\ Universit\'e de Toulon, ESPE
%    Nice-Toulon \\Universit\'e de Nice Sophia Antipolis, France} \and
%  \IEEEauthorblockN{Nicolas M\'eloni} \IEEEauthorblockA{Institut de
%   Math\'ematiques de Toulon\\ Universit\'e de Toulon, France} \and
%  \IEEEauthorblockN{Pascal V\'eron}
% \IEEEauthorblockA{Institut de
%   Math\'ematiques de Toulon\\ Universit\'e de Toulon, France}}

\maketitle              % typeset the title of the contribution

\begin{abstract}
The Adapted Modular Number System\break (AMNS) is a sytem of representation of integers to speed up arithmetic operations modulo a prime $p$. Such a system can be defined by a tuple $(p, n, \gamma, \rho, E)$ where $E\in\mathbb{Z}[X]$. In \cite{amns_12} conditions are given to build AMNS with $E(X)=X^n + 1$. In this paper, we generalize their results and show how to generate multiple AMNS for a given prime $p$ with $E(X)=X^n-\lambda$ and $\lambda\in\mathbb{Z}$. Moreover, we propose a complete set of algorithms without conditional branching to perform arithmetic and conversion operations in the AMNS, using a Montgomery-like method described in \cite{amns_08}.

\keywords{
Modular number system \and Modular arithmetic \and Side-channel countermeasure
}
\end{abstract}

\section{Introduction}

Efficient implementations of most of modern public-key cryptography algorithms depend on 
the efficiency of the modular arithmetic implementation. Such cryptosystems  usually need fast arithmetic modulo 
integers of size from 160 bits up to several thousand bits. Adapting or building unusual arithmetic for cryptographic purpose may offer interesting properties.  

For instance, Residue Number Systems \cite{gar_56} is a non positional arithmetic which have parallel properties. This makes them suitable for SIMD architectures \cite{abs_12} and adapted to many cryptosystems \cite{bi_04,behz_16,abs_12}. Furthermore, they have interesting leak resistant properties \cite{bilt_04,bde_13}.

In positional (usual) number system, a positive integer $N$ is represented in base $\beta$ as follows : 
\begin{center}
	$N = \sum\limits_{i=0}^{k-1} d_i\beta^i$,
\end{center}
where $\,0 \leqslant d_i < \beta$. If $d_{k-1} \neq 0$, $N$ is called a $k$-digit $\beta$-base number. $\beta$ is often taken as a power of two.

In modular arithmetic, the modulus $p$ is generally used several times and according the target architecture, the base $\beta$ is fixed. In such context, elements modulo $p$ can be seen as polynomials of degree lower than the number of digits of $p$ in base $\beta$.
In \cite{amns_04}, Bajard and al. introduced the Modular Number System (MNS) which can be seen as a generalization of positional number systems. 
\begin{definition}
	\label{def:mns}
	A modular number system (MNS) $\mathcal{B}$ is defined by a tuple $(p, n, \gamma, \rho)$, such that for every integer $0 \leqslant x < p$, there exists a vector $V = (v_0, \dots, v_{n-1})$ such that: 
	\begin{center}
		{\large $x \equiv \sum\limits_{i=0}^{n-1} v_i\gamma^i \, ({\text{\rm mod}} \, p) \;$},
	\end{center}
	with $|v_i| < \rho$ \, and \, $0 < \rho,\gamma < p$. \\
	In that case, we say that $V$ (or equivalently the polynomial $V(X) = v_0 + v_1.X + \dots + v_{n-1}.X^{n-1}$) is a representative of $x$ in $\mathcal{B}$ and we notate $V \equiv x_{\mathcal{B}}$.
\end{definition}
\begin{example}
	In Table \ref{tab:exple1}, we give a representative in the MNS $\mathcal{B} = (17, 3, 7, 2)$ for each element modulo $p = 17$. 
	In particular, we can verify that if we evaluate $-1+X+X^2$ in $\gamma$ , we have $-1 + \gamma + \gamma^2 = - 1 + 7 + 49 = 55 \equiv 4 \, \text{\rm mod} \, 17$. We have also $deg(-1+X+X^2) = 2 < 3$ and $\|-1+X+X^2\|_{\infty} = 1 < 2$.
\begin{table}[H]
\renewcommand{\arraystretch}{1.4}
\begin{center}
\small
\begin{tabular}{|c|c|c|c|c|}
\cline{1-4} 
0 & 1 & 2 & 3 \\
\cline{1-4}
0 & 1 & $-X^2$ & $1-X^2$ \\
\cline{1-4}
\multicolumn{1}{c}{} \\
\cline{1-4}
4 & 5 & 6 & 7\\
\cline{1-4}
$-1+X+X^2$ & $X+X^2$ & $-1+X$ & $X$ \\
\cline{1-4}
\multicolumn{1}{c}{} \\
\cline{1-4}
8 & 9 & 10 & 11 \\
\cline{1-4}
$1+X$ & $-X-1$ & $-X$ & $-X+1$ \\
\cline{1-4}
\multicolumn{1}{c}{} \\
\hline
12 & 13 & 14 & 15 & 16\\
\hline
$-X-X^2$ & $1-X-X^2$ & $-1+X^2$ & $X^2$ & $-1$ \\
\hline
\end{tabular}
\end{center}
\caption{The elements of $\mathbb{Z}/17\mathbb{Z}$ in $\mathcal{B} = (17, 3, 7, 2)$}
\label{tab:exple1}
\end{table}
\end{example}
An Adapted Modular Number System (AMNS) is a MNS with a special property:  ~$\gamma$ must be a root of the polynomial $E(X)=X^n - \alpha X - \lambda$ in $({\mathbb{Z}/p\mathbb{Z}})[X]$, where $\alpha$ and $\lambda$ are ``small'' integers. Notice that such  a property cannot be always satisfied for any tuple $(p,\alpha,\lambda)$. When such a tuple exists, this property allows to speed up arithmetic operations (see \cite{thPl_05,pmns_05} for more details). 

There are two strategies to build an AMNS: either by choosing the size of the modulus $p$, or by choosing the value of $p$. 
The first strategy has been studied by Bajard and al. in \cite{amns_04,thPl_05}. It leads to representation systems that allow very efficient modular arithmetic. But these AMNS are irrelevant for some cryptographic standards where the value of $p$ is already fixed.
With the second strategy, parameters $(n,\gamma,\rho)$ of the AMNS are built from a fixed value $p$. For the same value $p$, it is even possible to generate many distinct AMNS, which could be an interesting property to build side-channel resistant countermeasures. The main drawback of this strategy is that its parameters generation is more complex and also its leads to AMNS with arithmetic operations generally less efficient than those of the first strategy. 

Whether the value or the size of $p$ is fixed, arithmetic operations (addition, multiplication) in the corresponding AMNS remain the same. 
Only one operation differs:~the \textit{internal reduction}.
The internal reduction is a method that takes as input an element $X = (x_0, \dots, x_{n-1})$
and outputs $Y = (y_0, \dots, y_{n-1})$ such that 
$$\sum_{i=0}^{n-1} y_i\gamma^i\equiv \sum_{i=0}^{n-1} x_i\gamma^i\ (\text{mod}\  p),$$
with $|y_i|< \rho$, i.e. $Y$ is a representative of $X$ in $\mathcal{B}$.
As soon as a product (or an addition) of two elements in the AMNS is done, the internal reduction process may be required to guarantee that the result still be in the AMNS.

When only the size of $p$ is fixed, the internal reduction is essentially a vector matrix multiplication. This operation is very cheap because the generation process allows to choose a very sparse matrix (see \cite{amns_04,thPl_05}) with non zero elements equal to $\pm$ powers of two.

When the value of $p$ is fixed, there is no such freedom, so other methods have been proposed. In \cite{pmns_05}, Bajard and al. described two methods to perform the internal reduction: one using lookup tables and another one using a Barrett-like approach. In \cite{amns_08}, using a trick similar to Montgomery's modular multiplication, N\`egre and al. described another way to compute the internal reduction. This method is currently the best known when $p$ is fixed. The main difficulty is that it requires a polynomial $M'$ (see subsection \ref{subsec:genparam}) which existence is not easy to ensure. The authors gave in \cite{amns_08} a condition of existence which is unfortunately not sufficient. In \cite{amns_12}, El Mrabet and al. choosed the polynomial $E$ such that $\alpha=0$, $\lambda=-1$ and showed that in this context, the polynomial $M'$ always exists. Unfortunately, their proof has some issues.

In this paper, we consider the generation of an AMNS when the value of $p$ is already known. We prove that for the polynomial $E(X)=X^n-\lambda$ (i.e. $\alpha=0$), the polynomial $M'$ always exists for any $\lambda$ and we explain to obtain it. We describe a complete set of algorithms for arithmetic operations using the Montgomery-like method for internal reduction and we show how to generate all the parameters needed for these algorithms.

When computing the sum of two elements in the AMNS, a polynomial addition is done. Then an internal reduction may be required to keep the result in the AMNS. Regardless how this reduction is done, its cost could be too expensive compared to that of a simple polynomial addition. In this paper, we detail an idea that allows to build, from  a parameter $\Delta$, an AMNS such that at most $\Delta$ additions followed by a multiplication and only one reduction can be computed with the corresponding result still in the AMNS. In the last part, we talk about the implementation, give some examples of AMNS and show some comparisons with popular big integer libraries. 

%\cite{amns_04}
%\cite{thPl_05}
%\cite{pmns_05}
%\cite{amns_08}
%\cite{amns_09}
%\cite{amns_12}
%\cite{montMult_85}
%\cite{gnu_mp}
%\cite{baldi2014qc}
%\cite{nthroot}
%\cite{mat_permanent}

%\section{Definitions and properties}
%\label{sec:definitions}

\section{Arithmetic operations in AMNS}

Let $\mathcal{B} = (p,n,\gamma,\rho,E)$ be an AMNS, $a\in{\mathbb{Z}/p\mathbb{Z}}$ and $A$ its representative in $\mathcal{B}$. 
Most of the time, an element $A$ will be considered as a polynomial $A(X)$ of degree $n-1$ such that $A(X) = a_0 + a_1.X + \dots + a_{n-1}.X^{n-1}$ and $A(\gamma) \equiv a \; ({\rm mod} \, p)$.  

As already mentionned in the introduction, the general form of the polynomial $E$ is $E(X) = X^n - \alpha X - \lambda$, with $E(\gamma) \equiv 0 \, ({\rm mod} \, p)$.
$E$ is called the \textit{external reduction polynomial} and it is used to reduce polynomials which degree exceeds $n-1$. In this paper, we consider the case $\alpha = 0$ (i.e $E(X) = X^n - \lambda$). This choice makes arithmetic operations even faster. Moreover in this context, finding a root $\gamma$ (modulo $p$) of $E(X)$ boils down to compute a nth-root of $\lambda$ modulo $p$ (see \cite{nthroot}). 
\begin{proposition}
\label{prop:rho-min}
Let $\mathcal{B} = (p,n,\gamma,\rho,E)$ be an AMNS. The integer $p$ satisfies
 $p \leqslant (2\rho)^n$, hence $\lceil \sqrt[n]{p}/2 \rceil \leqslant \rho$.
\end{proposition}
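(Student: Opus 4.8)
The statement is a counting argument: an AMNS $\mathcal{B}=(p,n,\gamma,\rho,E)$ represents every residue class modulo $p$ by a polynomial with coefficients of absolute value strictly less than $\rho$, so I want to count how many such polynomials there are and force that count to be at least $p$. First I would observe that, by the definition of an AMNS (Definition~\ref{def:mns}), for each integer $x$ with $0\leqslant x<p$ there exists a representative $V=(v_0,\dots,v_{n-1})$ with $|v_i|<\rho$ for all $i$. Since $\rho$ is a positive integer, the condition $|v_i|<\rho$ means $v_i\in\{-(\rho-1),\dots,-1,0,1,\dots,\rho-1\}$, a set of exactly $2\rho-1$ integers; hence the number of vectors $V\in\mathbb{Z}^n$ satisfying $\|V\|_\infty<\rho$ is $(2\rho-1)^n$.

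**Main step.** Now consider the evaluation map sending such a vector $V$ to the residue $\sum_{i=0}^{n-1} v_i\gamma^i \bmod p \in \mathbb{Z}/p\mathbb{Z}$. By the existence-of-representatives property, this map is surjective onto $\mathbb{Z}/p\mathbb{Z}$. A surjection from a set of size $(2\rho-1)^n$ onto a set of size $p$ forces $p \leqslant (2\rho-1)^n$. Since $2\rho-1<2\rho$, this gives the slightly weaker-looking but sufficient bound $p \leqslant (2\rho)^n$. Taking $n$th roots yields $\sqrt[n]{p}\leqslant 2\rho$, i.e. $\sqrt[n]{p}/2\leqslant\rho$, and since $\rho$ is an integer we may round up: $\lceil \sqrt[n]{p}/2\rceil \leqslant \rho$.

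**Expected obstacle.** There is essentially no hard step here; the argument is a one-line pigeonhole once the cardinality count is set up. The only points requiring a little care are (i) being honest that the sharp count is $(2\rho-1)^n$ rather than $(2\rho)^n$ — one should either prove the cleaner $p\leqslant(2\rho)^n$ directly from $(2\rho-1)^n < (2\rho)^n$, or state the sharper inequality; and (ii) the final rounding step, where one must check that $\sqrt[n]{p}/2\leqslant\rho$ together with $\rho\in\mathbb{Z}$ indeed gives $\lceil\sqrt[n]{p}/2\rceil\leqslant\rho$ (immediate, since $\rho$ is an integer upper bound for $\sqrt[n]{p}/2$). I would present it as: count vectors, invoke surjectivity, apply pigeonhole, take roots.
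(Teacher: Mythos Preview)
Your proposal is correct and follows essentially the same counting/pigeonhole argument as the paper's proof. You are in fact slightly more careful than the paper: the paper asserts that the number of admissible coefficient vectors is $(2\rho)^n$, whereas you correctly observe that with the strict bound $|v_i|<\rho$ and integer $\rho$ the exact count is $(2\rho-1)^n$, from which the stated inequality $p\leqslant(2\rho)^n$ follows a fortiori.
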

\begin{proof}
The number of elements in $\mathcal{B}$ is $(2\rho)^n$, as elements can have negative coefficients and their absolute values are bounded by $\rho$ (see definition \ref{def:mns}). We want to represent all elements in $\mathbb{Z}/p\mathbb{Z}$, so $\rho$ must be such that $p \leqslant (2\rho)^n$.
\end{proof}
%\begin{definition}
%\label{def:redond}
%The redundancy $\pi$ of an AMNS is defined by 
%$\pi=\lceil\log_2(\rho)-\log_2(p)/n\rceil\,.$ It gives the extra amount of bits used by the AMNS elements per symbol compared to the classical representation.
%\end{definition}
%In the next section, we present the basic arithmetic operations in the AMNS.
For cryptographic needs, the most important operations are addition and multiplication. But, as elements representation in AMNS is different from the binary one, we also need conversion methods to find a representative of each integer modulo $p$ in the AMNS (and vice-versa). 
All these operations (conversion, addition, multiplication) will use an internal reduction process.  Let $\mathcal{B} = (p, n, \gamma, \rho, E)$ be an AMNS such that $E(X) = X^n - \lambda$ and $\gamma^n \equiv \lambda \, \text{mod} \, p$, the internal reduction process  maps a polynomial $V(X)\in\mathbb{Z}[X]$ to a polynomial $\tilde V(X)\in\mathbb{Z}[X]$ such that $\tilde V(\gamma)\equiv V(\gamma)\pmod p$, $\|\tilde V \|_{\infty} < \|V \|_\infty$ and $\deg \tilde V = \deg V$.

\subsection{The internal reduction}
In \cite{pmns_05}, the authors suggest to use  a Barrett-like algorithm for the reduction procedure. In \cite{amns_08}, authors improve the reduction procedure by using a Mongtomery-like algorithm (see Algorithm \ref{alg:redcoeff}).
\begin{algorithm}[H]
  \caption{RedCoeff($V\in\mathbb{Z}[X]$) (Coefficient reduction)}
  \label{alg:redcoeff}
  \begin{algorithmic}[1]
    \Require $deg(V) \leqslant n-1$, $\mathcal{B} = (p, \, n, \, \gamma, \, \rho, E)$, $M \in \mathcal{B}$, such that $M(\gamma)\equiv 0\pmod p$, an integer $\phi$ and $M'\equiv -M^{-1}\ \text{mod}(E,\phi)$.
    \vspace{1mm}
    \Ensure $S(\gamma) = V(\gamma)\phi^{-1} \, \text{mod} \, p$
    \vspace{1mm}
    \State $Q \leftarrow V \times M' \, \text{mod} \, (E, \, \phi)$
    \vspace{1mm}
    \State $R \leftarrow (V + Q \times M \, \text{mod}\, E)$
    \vspace{1mm}
	\State $S \leftarrow R / \phi$
    \vspace{1mm}
    \State return $S$
  \end{algorithmic}
\end{algorithm}
\noindent It is proved in \cite{amns_08} that:
\begin{proposition}
\label{prop-amns-08}
Let $\sigma = \|M\|_{\infty}$.
If $V$, $\rho$ and $\sigma$ are such that:
\begin{center}
 $\|V\|_{\infty} \leqslant n|\lambda|\rho^2$, \;$\rho \geqslant 2|\lambda|n\sigma$ \; and \; $\phi \geqslant 2|\lambda|n\rho$
\end{center} 
then $S$ (the output of the algorithm \ref{alg:redcoeff}) is such that $\|S\|_{\infty} < \rho$ (i.e $S \in \mathcal{B})$.
\end{proposition}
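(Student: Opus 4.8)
The plan is to push infinity–norm bounds through the three lines of Algorithm~\ref{alg:redcoeff}; the only point that requires genuine care is estimating how much a reduction modulo $E(X)=X^n-\lambda$ can inflate the norm of a product of two polynomials of degree at most $n-1$.

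First I would establish that product estimate. If $A,B\in\mathbb{Z}[X]$ both have degree at most $n-1$, then the coefficient of $X^k$ in $AB\bmod E$ equals $\sum_{i+j=k}a_ib_j+\lambda\sum_{i+j=k+n}a_ib_j$, the indices ranging over $0\le i,j\le n-1$. A quick count gives $k+1$ terms in the first sum and $n-1-k$ in the second, hence $n$ terms altogether; since $E$ is nondegenerate we have $|\lambda|\ge 1$, and the triangle inequality yields
$$\|AB\bmod E\|_\infty\ \le\ n|\lambda|\,\|A\|_\infty\,\|B\|_\infty.$$
Because every product inside the algorithm is taken modulo $E$, all of $Q$, $R$, $S$ have degree at most $n-1$; so once $\|S\|_\infty<\rho$ is proved we automatically conclude $S\in\mathcal{B}$.

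Next I would apply this inside the algorithm. Line~1 returns $Q$ reduced modulo $\phi$, so its coefficients are integers in a window of width $\phi$ and $\|Q\|_\infty<\phi$. Then for $R=V+(QM\bmod E)$ the estimate above together with the triangle inequality gives
$$\|R\|_\infty\ \le\ \|V\|_\infty+n|\lambda|\,\|Q\|_\infty\,\|M\|_\infty\ <\ \|V\|_\infty+n|\lambda|\phi\sigma.$$
Before dividing by $\phi$ I would note why $S=R/\phi$ lies in $\mathbb{Z}[X]$: from $M'\equiv -M^{-1}\bmod(E,\phi)$ one gets $QM\equiv -V\pmod{(E,\phi)}$, so $R\equiv 0\pmod\phi$ coefficientwise; the same congruence, together with $E(\gamma)\equiv M(\gamma)\equiv 0\pmod p$, also gives $R(\gamma)\equiv V(\gamma)$ and hence the \texttt{Ensure} clause $S(\gamma)\equiv V(\gamma)\phi^{-1}\pmod p$.

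Finally I would combine the pieces. Dividing the bound on $\|R\|_\infty$ by $\phi$,
$$\|S\|_\infty\ <\ \frac{\|V\|_\infty}{\phi}+n|\lambda|\sigma.$$
The hypothesis $\|V\|_\infty\le n|\lambda|\rho^2$ with $\phi\ge 2|\lambda|n\rho$ bounds the first summand by $\rho/2$, and $\rho\ge 2|\lambda|n\sigma$ bounds the second by $\rho/2$, so $\|S\|_\infty<\rho$. I do not expect a serious obstacle here, since the argument is entirely triangle-inequality bookkeeping; the one thing to get exactly right is the constant $n|\lambda|$ in the product estimate (not $(2n-1)|\lambda|$ or $n^2|\lambda|$), because the resulting chain of inequalities is tight precisely up to the factor $2$ built into the hypotheses on $\rho$ and $\phi$.
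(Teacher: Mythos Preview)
The paper does not actually prove this proposition: it merely cites \cite{amns_08} for the result, and later (in the discussion preceding Algorithm~\ref{alg:conv_to_amns2}) quotes without proof the intermediate bound $\|S\|_\infty < (\|V\|_\infty + n\phi|\lambda|\sigma)/\phi$, which is precisely the inequality you derive. Your argument is correct and is the standard one---the count of $k+1$ plus $n-1-k$ terms giving the constant $n|\lambda|$ in the product estimate is exactly right, and the remaining steps are straightforward.
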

This property is useful for multiplication. If $A$ and $B$ are two elements of $\mathcal{B}$, then the coefficients of $V(X) = A(X)B(X)\pmod {E(X)}$ meet the requirements of pro\-position \ref{prop-amns-08} as proved in  \cite{amns_08}. Hence, one single call to RedCoeff will output an element $S\in\mathcal B$, such that  $S(\gamma)\equiv A(\gamma)B(\gamma)/\phi\pmod p$, similarly to Montgomery modular reduction method. 

Nevertheless, using this algorithm, the computation of the product $\alpha_1\times\alpha_2\times\cdots\times\alpha_k$, with $\alpha_i\in\mathbb{Z}/p\mathbb{Z}$, will output a polynomial $S$ such that
$S(\gamma)\equiv\phi^{-k}\prod_{i=1}^k\alpha_i\pmod p$. In order to ensure that the operations in AMNS are consistent, we propose to use the same idea as the Montgomery modular reduction method:~  any element $a\in\mathbb{Z}/p\mathbb{Z}$  will be replaced by  $a.\phi\pmod p$. 
Thus, any element $a\in\mathbb{Z}/p\mathbb{Z}$ will have a representative $A(X)\in\mathcal{B}$ such that $A(\gamma)\equiv a\phi\pmod p$. As a consequence, if  $A(X)$, $B(X)$ $\in\mathcal{B}$  respectively represent $a$, $b$ $\in\mathbb{Z}/p\mathbb{Z}$, then the reduction procedure applied to $V(X) = A(X)B(X)\pmod {E(X)}$ will output  a polynomial $S\in\mathcal B$, such that  $S(\gamma)\equiv ab\phi\pmod p$. We can now detail the conversion operations.
\begin{remark}
\label{rem:smallM}
The algorithm \ref{alg:redcoeff} requires the existence of a polynomial $M$ which is invertible mod $(E,\phi)$. We will discuss this important point in subsection \ref{subsec:genparam}.
Moreover, proposition \ref{prop-amns-08} implies that $\rho > \|M\|_{\infty}$, hence we want $\|M\|_{\infty}$ as small as possible.\\
\end{remark}
In the sequel of the paper, we will assume that $\rho \geqslant 2|\lambda|n\|M\|_{\infty}$ and $\phi \geqslant 2|\lambda|n\rho$. %We will also assume that any element $a \in \mathbb{Z}/p\mathbb{Z}$ is associated to a representative of $a.\phi$.
\subsection{Conversion operations}
\subsubsection{Conversion from binary representation to AMNS.}
We present here two methods that can be used for conversion from classical representation to AMNS. The first method (inspired from \cite{pmns_05}) involves precomputations and needs some storage capacity while the other one is less memory consuming but needs more computations.
\subsubsection*{Method 1} 
%\noindent\\
First {precompute} representatives $P_i(X)$ of $\rho$ powers in $\mathcal{B}$, i.e $P_i \equiv (\rho^i)_\mathcal{B}$, for $i=1, \, \dots, n-1$ (see section\ref{subsec:computRho}). The conversion is then obtained using the algorithm \ref{alg:conv_to_amns}. 
\begin{algorithm}[H]
  \caption{Conversion from classical representation to AMNS}
  \label{alg:conv_to_amns}
  \begin{algorithmic}[1]
    \Require $a \in \mathbb{Z}/p\mathbb{Z}$ \, and \, $\mathcal{B} = (p, \, n, \, \gamma, \, \rho, E)$
    \vspace{1mm}
    \Ensure $A \equiv (a.\phi)_\mathcal{B}$
    \vspace{1mm}
    \State $b = (a.\phi^2) \, \text{mod}\, p$
    \vspace{1mm}
    \State $b = (b_{n-1}, ..., b_0)_\rho$
	\vspace{1mm}    
    \State $U \leftarrow \sum\limits_{i=0}^{n-1} b_i.P_i(X)$
    \vspace{1mm}
    \State $A \leftarrow \text{RedCoeff}(U)$
    \vspace{1mm}
    \State return $A$
  \end{algorithmic}
\end{algorithm}
At line 3, $U$ is a representative of $a.\phi^2$ and $\|U\|_{\infty} < n.\rho^2$. Hence from proposition \ref{prop-amns-08}, $\|A\|_{\infty} < \rho$ and $A \equiv (a.\phi)_\mathcal{B}$.
\subsubsection*{Method 2}
\label{remark:output-redcoeff}
%\noindent\\
RedCoeff procedure outputs $S$  such that:\\
$\|S\|_{\infty} < (\|V\|_{\infty} + n\phi|\lambda|\sigma)/\phi$.
If $\rho \geqslant 2n|\lambda|\sigma$ and\break $\phi \geqslant 2|\lambda|n\rho$, this implies that:
\begin{center}
$\|S\|_{\infty} < \frac{\|V\|_{\infty}}{2|\lambda|n\rho} + \frac{\rho}{2} \leqslant \frac{\|V\|_{\infty}}{2\rho} + \frac{\rho}{2}\,.$
\end{center}
So, if \, $\|V\|_{\infty} < \rho^2$, then $\|S\|_{\infty} < \rho$ (i.e: $S \in \mathcal{B}$). 
But if \, $\|V\|_{\infty} \geqslant \rho^2$, then $\|S\|_{\infty} < \|V\|_{\infty}/\rho$. That is, one  call of RedCoeff divides the coefficients of $V$ by at least $\rho$ if $\|V\|_{\infty} \geqslant \rho^2$.
We propose a conversion procedure based on this fact.\\
\noindent\\
Let $T = \phi^n \, \text{mod} \, (p)$.  The algorithm \ref{alg:conv_to_amns2} describes another way to compute the conversion of an element of $\mathbb{Z}/p \mathbb{Z}$.
\begin{algorithm}[H]
  \caption{Conversion from binary representation to AMNS}
  \label{alg:conv_to_amns2}
  \begin{algorithmic}[1]
    \Require $a \in \mathbb{Z}/p\mathbb{Z}$, $\mathcal{B} = (p, \, n, \, \gamma, \, \rho, E)$ and $T = \phi^n \, \text{mod} \, (p)$
    \vspace{1mm}
    \Ensure $A \equiv (a.\phi)_\mathcal{B}$
    \vspace{1mm}
    \State $A = (a.T) \, \text{mod}\, p$  /* polynomial of degree 0 */
    \vspace{1mm}
    \For{$i=1 \dots n-1$}
    \vspace{1mm}
    \State $A \leftarrow \text{RedCoeff}(A)$
    \vspace{1mm}
    \EndFor
    \vspace{1mm}
    \State return $A$
  \end{algorithmic}
\end{algorithm}
As previously mentionned one single call to RedCoeff method divides its input coefficients by at least $\rho$. At line 1, $A$ is a polynomial of degree 0 which constant coefficient is strictly less than $p$ $(\leqslant \rho^n)$. As a consequence, calling $n-1$ times RedCoeff on $A$  ensures that the algorithm will output $A $ in $\mathcal{B}$ with $A \equiv (a.\phi)_\mathcal{B}$.
This method is slower than the first method , but its advantage is that only the precomputation of $T$ is needed. 
\begin{remark}
This method always works because $\phi^n \geqslant p$ (as $(2.\rho)^n \geqslant p$ and $\phi \geqslant 2.\rho$), so $\phi^n \neq T$. 
Otherwise, its output $A$ could be such that $\|A\|_{\infty} \geqslant \rho$, and so $A$ will not be in $\mathcal{B}$.
\end{remark}
\subsubsection{Conversion from AMNS to binary representation.}
We present here a slight modification of two methods given in \cite{pmns_05} for computing the integer value corresponding to an element in an AMNS $\mathcal{B}$. 
\subsubsection*{Method 1}
The integer represented by $A$ in $\mathcal{B}$ is the value $A(\gamma)\phi^{-1}\pmod p$. It can be computed using the classical Horner's method as described in Algorithm \ref{alg:conv_from_amns}.
\begin{algorithm}[h]
  \caption{Conversion from AMNS to classical representation}
  \label{alg:conv_from_amns}
  \begin{algorithmic}[1]
    \Require $A \in \mathcal{B}$ and $\mathcal{B} = (p, \, n, \, \gamma, \, \rho, E)$
    \vspace{1mm}
    \Ensure $a = A(\gamma).\phi^{-1} \, \text{mod}\, p$
    \vspace{1mm}
    \State $A \leftarrow \text{RedCoeff}(A)$
	\vspace{1mm} 
	\State $a \leftarrow 0$
	\vspace{1mm}     
    \For{$i=n-1 \dots 0$}
    \vspace{1mm}
    \State $a \leftarrow (a\gamma + A_i) \, \text{mod}\, p$
    \vspace{1mm}
    \EndFor
    \vspace{1mm}
    \State return $a$
  \end{algorithmic}
\end{algorithm}
\subsubsection*{Method 2} 
This method (Alg. \ref{alg:conv_from_amns2}) is a simple evaluation of a polynomial with {precomputed} powers  $\gamma$ modulo $p$. We note them $g_i = \gamma^i \pmod p$, for $i=0, \, \dots, n-1$. 
\begin{algorithm}[h]
  \caption{Conversion from AMNS to classical representation}
  \label{alg:conv_from_amns2}
  \begin{algorithmic}[1]
    \Require $A \in \mathcal{B}$ and $\mathcal{B} = (p, \, n, \, \gamma, \, \rho, E)$
    \vspace{1mm}
    \Ensure $a = A(\gamma).\phi^{-1} \, \text{mod}\, p$
	\vspace{1mm} 
    \State $A \leftarrow \text{RedCoeff}(A)$
	\vspace{1mm} 
	\State $a \leftarrow 0$
	\vspace{1mm}     
    \For{$i=n-1 \dots 0$}
    \vspace{1mm}
    \State $a \leftarrow a + A_ig_i$
    \vspace{1mm}
    \EndFor
	\vspace{1mm} 
	\State $a \leftarrow a\ \text{mod}\ p$    
   \State return $a $
  \end{algorithmic}
\end{algorithm}
\begin{remark}
The two algorithms start by a call to RedCoeff in order to compute a polynomial $B$ such that $B(\gamma)\equiv A(\gamma)\phi^{-1}\pmod p$. Such a call is expected to be less expensive than a direct computation of $A(\gamma)\phi^{-1}\pmod p$. In Algorithm \ref{alg:conv_from_amns}, at each step of the loop, the product of two elements of size $\log_2(p)$ is computed, and a modular reduction is done. In Algorithm  \ref{alg:conv_from_amns2}, at each step of the loop, the product of one element of size  $\log_2(p)$ with an element of size $\log_2 (\rho)$ is computed.
Hence the size of the last computed value is about $\log_2(n)+\log_2(\rho)+\log_2(p)$. Only one modular reduction is computed to obtain the result. This method requires the precomputation of some powers of $\gamma$ modulo $p$, but is much faster than Algorithm \ref{alg:conv_from_amns}.
\end{remark}
\subsection{Multiplication}
\begin{algorithm}[H]
  \caption{Multiplication in AMNS}
  \label{alg:amns_mult}
  \begin{algorithmic}[1]
    \Require $A \in \mathcal{B}$, $ B \in \mathcal{B}$ and $\mathcal{B} = (p, \, n, \, \gamma, \, \rho, E)$
    \vspace{1mm}
    \Ensure $S \in \mathcal{B}$ with $S(\gamma) \equiv A(\gamma).B(\gamma).\phi^{-1} \, \text{mod}\, p$
    \vspace{1mm}
    \State $V \leftarrow A.B \, \text{mod}\, E$
	\vspace{1mm}    
    \State $S \leftarrow \text{RedCoeff}(V)$
    \vspace{1mm}
    \State return $S$
  \end{algorithmic}
\end{algorithm} 
The multiplication of two polynomials increases the degree of the result. As a consequence, we must first reduce the degree and next reduce coefficient sizes. The reduction of the degree is done using the external reduction polynomial $E$.
As $E(\gamma) \equiv 0 \, (\text{mod}\, p)$, then $A.B(\gamma) \equiv V (\gamma) \,(\text{mod}\, p)$. 
After the external reduction, the result $V$ has a degree lower than $n-1$ but we only have $\|V\|_{\infty} \leqslant n|\lambda|\rho^2$. So, the next and final step is to reduce the sizes of its coefficients with RedCoeff method so that theirs absolute values are lower than $\rho$.

\subsection{Addition}
\label{subsec:addition}
\begin{algorithm}[H]
  \caption{Addition in AMNS}
  \label{alg:amns_add}
  \begin{algorithmic}[1]
    \Require $A \in \mathcal{B}$, $ B \in \mathcal{B}$ and $\mathcal{B} = (p, \, n, \, \gamma, \, \rho, E)$
    \vspace{1mm}
    \Ensure $S = A + B$ 
    \vspace{1mm}
    \State $S \leftarrow A + B$
    \vspace{1mm}
    \State return $S$
  \end{algorithmic}
\end{algorithm}

The addition is simply a sum of two polynomials which doesn't increase the result degree. However, the result $S$ may not satisfy $\|S\|_{\infty} < \rho$. The simplest way to have the output $S$ always in $\mathcal{B}$ is, if $\|S\|_{\infty} \geqslant \rho$, to perform a coefficient reduction on $S$ followed by a multipcation by a representative of $\phi^2$, using algorithm \ref{alg:amns_mult}. Unfortunaly, this is too expensive for an addition. \\

We propose an alternative method.
Suppose that, for the target application, one needs to compute at most $\Delta$ consecutive additions so that the final output $S$ is such that $\|S\|_{\infty} \leqslant (\Delta + 1)\rho$. Suppose now that 
the product $S\times T$ must be computed where $T$ is the output of $\Theta$  consecutive additions ($0 \leqslant \Theta \leqslant \Delta$). Let $S$ and $T$ be two such elements (i.e $\|S\|_{\infty} \leqslant (\Delta + 1).\rho$ and $\|T\|_{\infty} \leqslant (\Theta + 1).\rho$). Using algorithm \ref{alg:amns_mult} to multiply $S$ and $T$, one has the output $W$ such that:
\begin{center}
$\|W\|_{\infty} < \frac{n|\lambda|(\Delta + 1)(\Theta+1)\rho^2}{\phi} + \frac{\rho}{2}\,.$
\end{center}
So, as we want $\|W\|_{\infty} < \rho$, it suffices to take $\phi$ such that: \begin{center}
$\phi \geqslant 2(\Delta + 1)^2|\lambda|n\rho\,.$
\end{center}
\textbf{Note}: For $\Delta = 0$, we obtain $\phi \geqslant 2|\lambda|n\rho$, which corresponds to what was suggested by proposition \ref{prop-amns-08}. \\

Until now, we have defined the AMNS and shown how to perform essential arithmetic operations in it. In the following, we will show how to generate all
necessary parameters.

\section{AMNS parameters generation}
In this section, we describe how to generate all necessary parameters of an AMNS. We assume that $p$ is a prime number larger than 3,
thus $p-1$ is always even. We  also assume that $\lambda  < p$, for consistency.\\

The complete set of parameters of $\mathcal{B}$ is:
\begin{itemize}
	\item $p$: a prime integer
	\item $n$: the number of coefficients of elements in the AMNS
	\item $\lambda$: a ``small'' integer.
	\item $\gamma$: a nth-root of $\lambda$ modulo $p$
	\item $\rho$: the upper-bound on the $\|.\|_\infty$ of the elements of  $\mathcal{B}$.
	\item $\phi$: the integer used in RedCoeff.
	\item $E$: the external reduction polynomial ($E(X) = X^n - \lambda$)
	\item $M$: the internal reduction polynomial.
	\item $M'$: a polynomial such that $M' = - M^{-1} \, \text{mod}\, (E, \, \phi)$
	\item $\Delta$: the maximum number of consecutive additions that can be done before a multiplication (see subsection \ref{subsec:addition}).
\end{itemize}

Amongst these parameters, $n$ and $p$ are choosen with regard to the target  application and the target architecture. The parameters 
$\lambda$ and $E$ depend on the existence of $\gamma$ (section \ref{subsec:existence}). The proof on the existence of the polynomials $M$ and $M'$ is detailled in section \ref{subsec:genparam} and the generation of $M$ in section \ref{subsec:genM}. Finally, the computation of the representatives of powers of $\rho$ that are used in the algorithm \ref{alg:conv_to_amns} is described in section \ref{subsec:computRho}.

\subsection{Existence of $\mathit{\gamma}$}
\label{subsec:existence}
The first constraint in the generation of an AMNS is the existence and the computation of $\gamma$, a nth-root modulo $p$ of $\lambda$.
\begin{proposition}
	\label{prop-gmm}
	Let $E(X) = X^n - \lambda$, for $\lambda \in \mathbb{Z}\setminus\{0\}$. 
	Let $g$ be a generator of $(\mathbb{Z}/p\mathbb{Z})^\times$ and $y$ such that $g^y \equiv \lambda \, \text{\rm mod}\, p$. If $gcd(n, p-1)\mid y$, then there exists $gcd(n, p-1)$ roots $\gamma$ of $E(X)$ in $\mathbb{Z}/p\mathbb{Z}$.
\end{proposition}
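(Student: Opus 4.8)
The statement is a classical fact about the solvability of the equation $X^n \equiv \lambda \pmod p$, rephrased in terms of a generator $g$. My plan is to work entirely in the cyclic group $(\mathbb{Z}/p\mathbb{Z})^\times$, which has order $p-1$, and to transport the equation $\gamma^n \equiv \lambda$ to an additive congruence in $\mathbb{Z}/(p-1)\mathbb{Z}$ via the discrete logarithm with respect to $g$. Write $d = \gcd(n, p-1)$. First I would note that since $g$ generates the group, every element, in particular $\gamma$, can be written as $\gamma \equiv g^x \pmod p$ for a unique $x \in \{0, 1, \dots, p-2\}$, and $\lambda \equiv g^y \pmod p$. Then $\gamma^n \equiv \lambda \pmod p$ is equivalent to $g^{nx} \equiv g^{y} \pmod p$, which, because $g$ has order $p-1$, is equivalent to the linear congruence $nx \equiv y \pmod{p-1}$ in the unknown $x$.

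The second step is the standard theory of linear congruences: the congruence $nx \equiv y \pmod{p-1}$ has a solution $x$ if and only if $d = \gcd(n, p-1)$ divides $y$, and in that case it has exactly $d$ solutions modulo $p-1$. This is exactly the hypothesis $\gcd(n,p-1) \mid y$, so under that hypothesis there are precisely $d$ values of $x$ in $\{0,\dots,p-2\}$ solving it. Finally, since $x \mapsto g^x$ is a bijection from $\mathbb{Z}/(p-1)\mathbb{Z}$ onto $(\mathbb{Z}/p\mathbb{Z})^\times$, these $d$ distinct solutions $x$ yield $d$ distinct roots $\gamma = g^x$ of $E(X) = X^n - \lambda$ in $\mathbb{Z}/p\mathbb{Z}$, and conversely every root arises this way. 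This gives exactly $d = \gcd(n,p-1)$ roots, as claimed.

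There is no real obstacle here; the argument is routine once one has the dictionary between multiplicative equations modulo $p$ and additive ones modulo $p-1$. The only point that deserves a sentence of care is the "exactly $d$ solutions" count for the linear congruence (as opposed to merely "at least one"): one should recall that if $x_0$ is one solution then the full solution set mod $p-1$ is $\{x_0 + k(p-1)/d : k = 0,\dots,d-1\}$, and that these $d$ residues are pairwise distinct modulo $p-1$, hence map to $d$ pairwise distinct roots $\gamma$. I would also remark in passing that $\lambda \neq 0$ (equivalently $\lambda$ invertible mod $p$, using $\lambda < p$ and $p$ prime, assuming $p \nmid \lambda$) is what lets us write $\lambda \equiv g^y$ in the first place; the case $p \mid \lambda$ is degenerate and excluded by the running assumptions on $\lambda$.
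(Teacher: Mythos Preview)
Your proof is correct and follows essentially the same approach as the paper: both reduce the equation $\gamma^n \equiv \lambda \pmod p$ to the linear congruence $nx \equiv y \pmod{p-1}$ via the discrete logarithm with respect to the generator $g$, and then invoke the standard count of solutions to such a congruence. Your write-up is in fact more complete than the paper's, since you explicitly argue that the $d$ solutions $x$ yield $d$ \emph{distinct} roots $\gamma$ via the bijection $x \mapsto g^x$, whereas the paper's short proof only exhibits one root and tacitly relies on the reader to supply this last step.
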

\begin{proof}
	If $gcd(n, p-1)\mid y$, the equation $nx\equiv y\pmod{p-1}$ has $k$ solutions $x$, where $k=gcd(n,p-1)$. Let $x_0$ be one of these solutions, and let consider $\gamma\equiv g^{x_0}\pmod{p-1}$. Then $\gamma^n\equiv\lambda\pmod p$.
\end{proof}
%\begin{remark}
Proposition \ref{prop-gmm} gives a condition that guarantees the existence of $\lambda$ nth-roots modulo $p$ and their number. But it requieres to compute $y$ such that $g^y \equiv \lambda \, \text{\rm mod}\, p$ (i.e the discrete logarithm of $\lambda$ in  $\mathbb{Z}/p\mathbb{Z}$) which can be very hard if $p$ is big enough. So, we give bellow sufficient (but not necessary) conditions that are easier to satisfy  and that guarantee the existence of a nth-root modulo $p$ of $\lambda$, taking eventually $\lambda = 1$.
%\end{remark}
\begin{corollary}
	\label{coro:lmbd}
	If $gcd(n, p-1)=1$ then there exists a unique nth-root $\gamma$ of $\lambda$ in $\mathbb{Z}/p\mathbb{Z}$, for any $\lambda \in \mathbb{Z}\backslash\{0\}$.
\end{corollary}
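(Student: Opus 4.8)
The plan is to read this off almost directly from Proposition~\ref{prop-gmm}. Take $k = \gcd(n,p-1) = 1$. Since $\lambda \in \mathbb{Z}\setminus\{0\}$ and, by the standing convention of this section, $\lambda < p$ (hence $|\lambda| < p$ and $p \nmid \lambda$), the residue $\lambda \bmod p$ lies in $(\mathbb{Z}/p\mathbb{Z})^\times$, so there is a well-defined exponent $y$ with $g^y \equiv \lambda \pmod p$ for a generator $g$. The divisibility hypothesis $\gcd(n,p-1)\mid y$ of Proposition~\ref{prop-gmm} is then automatic, because $1$ divides every integer. Proposition~\ref{prop-gmm} therefore produces exactly $\gcd(n,p-1) = 1$ root $\gamma$ of $E$ in $\mathbb{Z}/p\mathbb{Z}$, which simultaneously gives existence and uniqueness.

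For a self-contained argument that avoids even naming the discrete logarithm, I would instead show that $x \mapsto x^n$ is a bijection of $(\mathbb{Z}/p\mathbb{Z})^\times$. Surjectivity: choose $u,v \in \mathbb{Z}$ with $un + v(p-1) = 1$ (Bézout, using $\gcd(n,p-1)=1$) and set $\gamma \equiv \lambda^{u} \pmod p$; then by Fermat's little theorem $\gamma^{n} \equiv \lambda^{un} \equiv \lambda^{1 - v(p-1)} \equiv \lambda \cdot (\lambda^{p-1})^{-v} \equiv \lambda \pmod p$. Injectivity: if $\gamma_1^{n} \equiv \gamma_2^{n} \pmod p$ with $\gamma_1,\gamma_2$ units, then $(\gamma_1\gamma_2^{-1})^{n} \equiv 1 \pmod p$, so the multiplicative order of $\gamma_1\gamma_2^{-1}$ divides both $n$ and $p-1$, hence divides $\gcd(n,p-1) = 1$, forcing $\gamma_1 \equiv \gamma_2 \pmod p$. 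Existence of a unique $n$-th root of $\lambda$ is exactly bijectivity applied to the element $\lambda$.

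There is essentially no obstacle here: the only point requiring care is checking that $\lambda$ is a unit modulo $p$, so that ``the'' $n$-th root makes sense and the order argument applies, which is guaranteed by $\lambda \neq 0$ together with $|\lambda| < p$. The practical content — and the reason this corollary deserves separate mention from Proposition~\ref{prop-gmm} — is that no discrete-logarithm computation is needed: it suffices that $\gcd(n,p-1) = 1$, after which $\gamma$ can be obtained as $\lambda^{u}$ for a Bézout cofactor $u$.
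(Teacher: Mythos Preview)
Your proof is correct and, in its second paragraph, matches the paper's argument exactly: the paper also uses B\'ezout coefficients $u,v$ with $nu + (p-1)v = 1$ together with Fermat's little theorem to exhibit $\gamma \equiv \lambda^{u}\pmod p$ as an $n$-th root. Your proposal is in fact more complete than the paper's: the paper's proof gives only the explicit construction and does not spell out uniqueness, whereas you supply it twice --- once by reading off the root count from Proposition~\ref{prop-gmm}, and once via the order-divides-$\gcd$ injectivity argument.
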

\begin{proof}
	If $gcd(n, p-1)=1$, this nth-root can be easily computed. In fact, using the extended euclidean algorithm, we compute Bezout coefficients for $(n, p-1)$. That is, $u$ and $v$ in $\mathbb{Z}$ such that $n.u + (p-1).v = 1$. So, $\lambda = \lambda^{n.u + (p-1).v} = (\lambda^u)^n.(\lambda^{p-1})^v$. As, $\lambda^{p-1} \equiv 1 \pmod p$, it is obvious that $\lambda \equiv (\lambda^u)^n \pmod p$. Which means that $\lambda^u \pmod p$ is a nth-root modulo $p$ of $\lambda$.
\end{proof}
%\begin{remark}
If $gcd(n, p-1)=1$ and $\lambda = 1$, then, using corollary \ref{coro:lmbd}, the unique nth-root $\gamma$ of $\lambda$ is $1$.
With $\lambda = 1$, an AMNS can not be build because the maximum value that can be computed is lower than $n\rho$, with $\rho \approx \sqrt[n]{p}$. So, it will not be posssible to generate all elements in $\mathbb{Z}/p\mathbb{Z}$. Below, we give a corollary for the case $\lambda = 1$.
%\end{remark}

\begin{corollary}
	\label{coro:lmbd1}
	If $gcd(n, p-1) > 1$ then there exists at least one non-trivial nth-root $\gamma$ of $1$. So, one can take $\lambda = 1$.
\end{corollary}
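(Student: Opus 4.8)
The plan is to use the structure of the cyclic group $(\mathbb{Z}/p\mathbb{Z})^\times$, which has order $p-1$. Let $g$ be a generator and set $k = \gcd(n, p-1)$; by hypothesis $k > 1$. First I would observe that the $n$th powers in $(\mathbb{Z}/p\mathbb{Z})^\times$ form a subgroup: the map $x \mapsto x^n$ is a group endomorphism of the cyclic group, and its image is exactly the unique subgroup of index $k$ (equivalently, of order $(p-1)/k$). Dually — and this is the direction I actually want — the $n$th roots of $1$ are the kernel of this endomorphism, which is the unique subgroup of order $k$, namely the group generated by $g^{(p-1)/k}$.

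The key step is then immediate: since $k > 1$, this subgroup of $n$th roots of unity has at least two elements, so it contains an element $\gamma = g^{(p-1)/k} \not\equiv 1 \pmod p$. By construction $\gamma^n = g^{n(p-1)/k}$, and since $k \mid n$ we have $n/k \in \mathbb{Z}$, so the exponent $n(p-1)/k = (n/k)(p-1)$ is a multiple of $p-1$; hence $\gamma^n \equiv 1 \pmod p$ by Fermat's little theorem. Thus $\gamma$ is a non-trivial $n$th root of $1$, and one may legitimately take $\lambda = 1$ with this $\gamma$ as the corresponding root of $E(X) = X^n - 1$.

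Alternatively, this can be read off directly from Proposition \ref{prop-gmm}: with $\lambda = 1$ we have $y = 0$ (since $g^0 \equiv 1$), and trivially $k \mid 0$, so Proposition \ref{prop-gmm} already guarantees $k = \gcd(n,p-1)$ roots of $X^n - 1$; when $k > 1$ at least one of these $k$ roots must differ from the obvious root $1$. I would probably present the argument this way, as a short corollary of Proposition \ref{prop-gmm}, and only fall back on the explicit description of the subgroup of $n$th roots of unity if a self-contained argument is preferred. There is no real obstacle here; the only thing to be careful about is the bookkeeping that $k \mid n$ is what makes $\gamma^n$ land in the subgroup generated by $g^{p-1}$, i.e. collapse to $1$, rather than merely lying in some proper subgroup.
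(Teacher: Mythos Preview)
Your proposal is correct and matches the paper's own argument almost verbatim: the paper also fixes a generator $g$, sets $d=\gcd(n,p-1)>1$, and exhibits $h=g^{(p-1)/d}$ as a non-trivial $n$th root of $1$, exactly your $\gamma$. Your alternative derivation directly from Proposition~\ref{prop-gmm} (taking $y=0$) is a pleasant shortcut the paper does not spell out, but it is the same idea.
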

\begin{proof}
	Let $\lambda = 1$, we are looking for $\gamma$ such that $\gamma^n \equiv 1 \pmod p$.
	It is well known that there are $gcd(n, p-1)$ nth-roots of unity modulo $p$. Let $g$ be a generator of $(\mathbb{Z}/p\mathbb{Z})^\times$, $d = gcd(n,p-1)$ with $d>1$, and  let $h = g^{(p-1)/d} \pmod p$. Then, $h^n \equiv 1 \pmod p$ and $h \neq 1$. So, $h$ is a non-trivial nth-root of $\lambda$. \\
	The other nth-roots are $h^i \pmod p$, for $2 \leqslant i \leqslant d$.
\end{proof}
\begin{remark}
	If $gcd(n, p-1) > 1$, it is not necessary to take $\lambda = 1$. There exists efficient algorithms (like in \cite{nthroot}) that can be used to compute the nth-root of an element $\lambda$ if such a root exists.
\end{remark}

\subsection{Existence of $M$ and $M'$}
\label{subsec:genparam}

Section \ref{subsec:existence} shows that once $n$ is fixed, the nth-root $\gamma$ of $\lambda$ can be easily obtained (taking eventually $\lambda=1$).
To represent all elements in $\mathbb{Z}/p\mathbb{Z}$, $\rho$ must be such that $\lceil\sqrt[n]{p}/2\rceil \leqslant \rho$ (see proposition \ref{prop:rho-min}). \\
For internal reduction, we also want $\rho \geqslant 2|\lambda|n\|M\|_{\infty}$ and $\phi \geqslant 2(\Delta + 1)^2|\lambda|n\rho$ (as discussed in subsection \ref{subsec:addition}). 

Now, it remains to see how to generate $M$ and $M'$. In \cite{amns_08}, the authors state that $M$ must be chosen such that gcd($E$,$M$)=1, but this does not guarantee the existence of $M'$.
Indeed, if gcd($E$,$M$)=1 then it exists $M'\in\mathbb{Q}[X]$ such that $MM'\equiv 1\pmod E$, but nothing guarantee that the coefficients of $M'$ are invertible modulo $\phi$. Hence, there is no evidence that gcd($E$,$M$)=1 implies that $M'M\equiv 1\pmod{(E,\phi)}$.

In section 3.3 of \cite{amns_09}, when the value of $p$ is already fixed and $\phi$ is a power of 2, the authors show how to build a lattice which reduced basis always contains a polynomial $M$ that is invertible modulo $(E,\phi)$. Unfortunately, their proof uses the fact that a polynomial $M$ is invertible modulo $(E,\phi)$ as soon as the evaluation of $M$ over all integers is odd. This is a necessary condition but not a sufficient one. As an example, let $E(X)=X^6+1$ and $M(X)=X^4+X^2+1$, the evaluation of $M$ is odd over all integers but it is not invertible modulo $(E,\phi)$ for any even value of $\phi$ because the resultant of $E$ and $M$ is $16$. This leads us to first recall some essential elements about the resultant of two polynomials. We will then give a proof of the existence of $M$ and $M'$.
\begin{definition}[Resultant]
	Let $\mathcal{A}$ be an integral domain. Let $A$ and $B$ be two polynomials in $\mathcal{A}[X]$. The resultant $Res(A,B)$ of $A$ and $B$ is the determinant of their Sylvester matrix. Therefore, it is an element of $\mathcal{A}$. 
\end{definition}

If $A(X) = a_0 + a_1X + \dots + a_{n}X^{n}$ and $B(X) = b_0 + b_1X + \dots + b_{m}X^{m}$, then their Sylvester matrix is defined as follows: 
\begin{center}
	\begin{small}
		$%\mathcal{SM} =
		\begin{pmatrix} 
		a_{n} & 0 & \dots & 0 & b_{m} & 0 & \dots & 0 \\
		a_{n-1} & a_{n} & \ddots & \vdots & \vdots & b_m & \ddots & \vdots \\
		\vdots & a_{n-1} & \ddots & 0 & \vdots & & \ddots & 0 \\
		\vdots & \vdots & \ddots & a_n & b_1 & & & b_m \\
		a_0 & & & a_{n-1} & b_0 & \ddots & \vdots & \vdots \\
		0 & \ddots & & \vdots & 0 & \ddots & b_1 & \vdots \\
		\vdots & \ddots & a_0 & \vdots & \vdots & \ddots & b_0 & b_1 \\
		0 & \dots & 0 & a_0 & 0 & \dots & 0 & b_0 \\
		\end{pmatrix}$
	\end{small}
\end{center}
In $\mathbb{Z}[X]$, there is no Bezout's identity, but the following essential property will help us to set an existence criteria for the polynomial $M'(X)$.
\begin{proposition}
	\label{prop:essential}
	Let $\mathcal{A}$ be an integral domain and let  $A$ and $B$ two non-zero polynomials in $\mathcal{A}[X]$ such that $\deg(A)+\deg(B) \geqslant 1$. There exist  $U$ and $V$ in $\mathcal{A}[X]$ such that   $A(X)U(X) + B(X)V(X) = Res(A,B)$, $\deg (U) < \deg (B)$, and $\deg (V) < \deg(A)$.
\end{proposition}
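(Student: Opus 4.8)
The plan is to deduce the identity from the adjugate (classical-adjoint) formula applied to the Sylvester matrix. Write $n=\deg A$, $m=\deg B$, and $d=n+m\geq 1$, and let $S$ be the $d\times d$ Sylvester matrix of $A$ and $B$ displayed above, so that $Res(A,B)=\det S$ by definition. First I would fix the dictionary between $S$ and polynomial arithmetic: identify $\mathcal{A}^d$ with the coefficient vectors of polynomials of degree $<d$ by letting row $r$ correspond to the monomial $X^{d-r}$ (so the top row carries the $X^{d-1}$-coefficient and the bottom row the constant term). With this convention, columns $1,\dots,m$ of $S$ are exactly the coefficient vectors of $X^{m-1}A,X^{m-2}A,\dots,X^{0}A$, and columns $m+1,\dots,m+n$ are those of $X^{n-1}B,\dots,X^{0}B$. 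Hence for any column vector $w=(u_1,\dots,u_m,v_1,\dots,v_n)^{T}\in\mathcal{A}^d$, the product $Sw$ is precisely the coefficient vector of $U(X)A(X)+V(X)B(X)$ with $U(X)=\sum_{k=1}^{m}u_kX^{m-k}$ and $V(X)=\sum_{\ell=1}^{n}v_\ell X^{n-\ell}$; note that $\deg U\leq m-1<\deg B$ and $\deg V\leq n-1<\deg A$ are forced by the shapes of these vectors.

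Next I would invoke the adjugate identity $S\cdot\mathrm{adj}(S)=\det(S)\,I_d=Res(A,B)\,I_d$, which is valid over any commutative ring. Applying both sides to the standard basis vector $e_d=(0,\dots,0,1)^{T}$ — which under the dictionary above is the coefficient vector of the constant polynomial $1$ — and setting $w:=\mathrm{adj}(S)\,e_d\in\mathcal{A}^d$, one gets $Sw=Res(A,B)\,e_d$. Reading $w$ back as the pair $(U,V)$ above turns this equation into $A(X)U(X)+B(X)V(X)=Res(A,B)$ in $\mathcal{A}[X]$, with the stated degree bounds already built in. (If one prefers the transposed convention for the Sylvester matrix, replace $\mathrm{adj}(S)$ by $\mathrm{adj}(S^{T})$; the determinant, hence $Res(A,B)$, is unchanged and the argument is identical.)

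The only genuine work is in the first step: verifying that multiplication by $S$ implements the map $(U,V)\mapsto AU+BV$ on the nose, which is a careful but routine tracking of the index shifts in the displayed matrix — this is where the precise shape of the Sylvester matrix is actually used. The degenerate cases contribute nothing: if $\deg A=0$, say $A=a_0\neq 0$, then $S=a_0 I_m$, $Res(A,B)=a_0^{m}$, and $U=a_0^{m-1}$, $V=0$ work directly; the case $\deg B=0$ is symmetric, and $\deg A=\deg B=0$ is excluded by the hypothesis $\deg A+\deg B\geq 1$. I would also note in passing that integrality of $\mathcal{A}$ is not used in this argument — commutativity suffices — although the integral-domain hypothesis is the natural one for the way the resultant is used afterwards.
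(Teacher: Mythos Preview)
The paper does not actually supply a proof of this proposition: it is stated as a standing fact about resultants (the ``essential property'' used to set up the existence criterion for $M'$) and is then invoked without further justification. So there is no paper argument to compare against.

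Your argument is correct and is in fact the standard one. The key observation --- that right-multiplication of the Sylvester matrix by a column vector $(u_1,\dots,u_m,v_1,\dots,v_n)^T$ computes the coefficient vector of $AU+BV$ with $\deg U<m$ and $\deg V<n$ --- is exactly what the column layout of $\mathcal{S}_{A,B}$ encodes, and once that dictionary is in place the adjugate identity $S\cdot\mathrm{adj}(S)=\det(S)\,I$ over any commutative ring immediately yields $U,V\in\mathcal{A}[X]$ with the required degree bounds and $AU+BV=\mathrm{Res}(A,B)$. Your handling of the boundary cases $\deg A=0$ or $\deg B=0$ is fine, and your remark that integrality of $\mathcal{A}$ is not needed for this particular statement (commutativity suffices) is also correct; the integral-domain hypothesis matters elsewhere in the paper but not here.
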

We can now state our existence criteria for the polynomial $M'$.
\begin{proposition}[Existence criteria]
	Let $M\in\mathbb{Z}[X]$, $E\in\mathbb{Z}[X]$ and $\phi \geqslant 2$ an integer. 
	
	If $\text{gcd}(\text{Res}(E,M),\phi)=1$ then there exists $M' \in\mathbb{Z}[X]$ such that $M'M\equiv 1\ \text{mod}\  (E,\phi)\,.$ 
\end{proposition}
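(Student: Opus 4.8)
The plan is to use Proposition~\ref{prop:essential} applied to the pair $(E,M)$ over the integral domain $\mathcal{A}=\mathbb{Z}$, and then reduce the resulting Bezout-like identity modulo $\phi$. First I would invoke Proposition~\ref{prop:essential}: since $E$ and $M$ are nonzero polynomials in $\mathbb{Z}[X]$ with $\deg(E)+\deg(M)=n+\deg(M)\geqslant 1$, there exist $U,V\in\mathbb{Z}[X]$ with $\deg(U)<\deg(M)$ and $\deg(V)<n$ such that
\[
E(X)U(X) + M(X)V(X) = \mathrm{Res}(E,M)\,.
\]
Reducing this identity modulo $E$ gives $M(X)V(X)\equiv \mathrm{Res}(E,M)\pmod{E}$ in $\mathbb{Z}[X]$.

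Next I would use the hypothesis $\gcd(\mathrm{Res}(E,M),\phi)=1$: let $r$ be an integer inverse of $\mathrm{Res}(E,M)$ modulo $\phi$, so that $r\cdot\mathrm{Res}(E,M)\equiv 1\pmod{\phi}$. Multiplying the congruence $M V\equiv \mathrm{Res}(E,M)\pmod E$ by $r$ yields $M(X)\cdot\big(rV(X)\big)\equiv r\,\mathrm{Res}(E,M)\pmod E$, and since $r\,\mathrm{Res}(E,M) = 1 + k\phi$ for some integer $k$, we get
\[
M(X)\cdot\big(rV(X)\big)\equiv 1 \pmod{(E,\phi)}\,.
\]
Setting $M'(X) = -\,rV(X)\bmod(E,\phi)$ (reducing coefficients modulo $\phi$ and the polynomial modulo $E$) gives $M'M\equiv -1\pmod{(E,\phi)}$; if one instead wants $M'M\equiv 1$, simply drop the sign, adapting to whichever sign convention the algorithm in the excerpt actually needs (Algorithm~\ref{alg:redcoeff} uses $M'\equiv -M^{-1}$, so the negative sign is the intended one). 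In either case $M'\in\mathbb{Z}[X]$ and it is a genuine inverse of $M$ modulo $(E,\phi)$.

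I do not expect a serious obstacle here; the only points requiring a little care are (i) checking that the degree hypothesis of Proposition~\ref{prop:essential} is met, which is immediate once $E=X^n-\lambda$ has degree $n\geqslant 1$ (and one should note $M$ must be taken nonzero, which is harmless since $M\equiv 0$ is useless anyway), and (ii) being explicit that "reducing modulo $(E,\phi)$" means first reducing the polynomial modulo $E$ and then its coefficients modulo $\phi$, and that these two operations commute in the relevant sense because $E$ is monic. The genuinely substantive input is Proposition~\ref{prop:essential}, which replaces the (unavailable) Bezout identity over $\mathbb{Z}[X]$ by an identity with the resultant on the right-hand side; everything after that is the same modular-inversion trick one uses in Montgomery reduction. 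The remaining task, not part of this proposition but flagged in the surrounding text, is to exhibit an $M$ with $\gcd(\mathrm{Res}(E,M),\phi)=1$ and small $\|M\|_\infty$, which is deferred to the lattice-based construction in a later subsection.
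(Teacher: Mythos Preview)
Your proof is correct and follows essentially the same route as the paper: invoke Proposition~\ref{prop:essential} to obtain a Bezout-like identity $UM+VE=\mathrm{Res}(E,M)$ over $\mathbb{Z}[X]$, then multiply through by an integer inverse of $\mathrm{Res}(E,M)$ modulo~$\phi$ to extract an inverse of $M$ modulo $(E,\phi)$. The only differences are cosmetic (you swap the roles of $U$ and $V$ relative to the paper, and you add side remarks on the degree hypothesis, the sign convention, and the meaning of reduction modulo $(E,\phi)$), none of which alter the argument.
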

\begin{proof}
	Let $r=$ Res($E$,$M$) and let $\psi$ the inverse of $r$ modulo $\phi$. From the proposition \ref{prop:essential}, there exist  $U$ and $V$ in $\mathbb{Z}[X]$ such that   $U(X)M(X) + V(X)E(X) = r$. Hence $\psi U(X)M(X)+\psi V(X)E(X)=\psi r$, which implies $\psi U(X)M(X)\equiv 1\ \text{mod}\ ({E,\phi})$. 
\end{proof}
\begin{remark}
	In RedCoeff method (alg. \ref{alg:redcoeff}), an exact division by $\phi$ is done. Because we want this method to be fast, we usually take $\phi$ as a power of two. So for $Res(E,M)$ to be invertible in $\mathbb{Z}/\phi\mathbb{Z}$, it must be odd, as invertible elements in $\mathbb{Z}/\phi\mathbb{Z}$ are those which are prime with $\phi$.
\end{remark}
As mentioned in remark \ref{rem:smallM}, $M$ must be chosen so that  $\|M\|_{\infty}$ is as small as possible. In the sequel of this paper, we will assume that $\phi$ is a power of two and we are going to show how to generate $M$ such that $Res(E,M)$ be odd and  $\|M\|_{\infty}$ be  small.
\subsection{Generation of the polynomial $M$}
\label{subsec:genM}
%\noindent\\
Let $\mathcal{L}$ be the lattice of all polynomials having $\gamma$ as root modulo $p$ and whose degree is at most $n-1$:
\begin{center}
	$\mathcal{L} = \{a(X) \in \mathbb{Z}[X]$, such that: $deg(a) < n$ and $a(\gamma) \equiv 0 \, \text{mod}\, (p)\}$
\end{center}
The idea for finding $M$ is to compute a reduced basis of $\mathcal{L}$ using a lattice reduction algorithm (like LLL algorithm for example) and then to take $M$ as an element (or a special combination of elements) of this reduced basis. In order to find a suitable polynomial $M$, we will distinguish two cases according to the parity of $\lambda$. We recall that for practical reason, $\phi$ is  a power of two. In the sequel, we will use the  following property:
\begin{property} 
	\label{rmk:mat_parity}
	\textbf{\textit{(Matrix determinant parity)}}
	\noindent\\
	Let $A, B \in \mathcal{M}_{n \times n}$ be two matrices with elements in $\mathbb{Z}$, such that: $A = (a_{ij})_{0 \leqslant i,j <n}$ \, and \, $B = (b_{ij})_{0 \leqslant i,j <n}$ with $b_{ij} = a_{ij}\pmod 2$. Then, the determinants of $A$ and $B$ have the same parity.
\end{property}
Let  $E(X) = X^n - \lambda$ and  $M(X) = m_0 + m_1X + \dots + m_{n-1}X^{n-1}$. The Sylvester matrix of $E$ and $M$ is defined as follows: \begin{center}
	\begin{small}
		$\mathcal{S}_{M,E} =
		\begin{pmatrix} 
		1 & 0 & \dots & 0 & m_{n-1} & 0 & \dots & 0 & 0\\
		0 & 1 & \dots & 0 & m_{n-2} & m_{n-1} & \dots & 0 & 0 \\
		\vdots & & \ddots & & \vdots & & & & \vdots \\
		0 & 0 & \dots & 1 & m_1 & m_2 & \dots & m_{n-1} & 0 \\
		0 & 0 & \dots & 0 & m_0 & m_1 & \dots & m_{n-2} & m_{n-1} \\
		-\lambda & 0 & \dots & 0 & 0 & m_0 & \dots & m_{n-3} & m_{n-2} \\
		\vdots & & & & \vdots & & \ddots & & \vdots \\
		0 & 0 & \dots & 0 & 0 & 0 & \dots & m_0 & m_1 \\
		0 & 0 & \dots & -\lambda & 0 & 0 & \dots & 0 & m_0 \\
		\end{pmatrix}$
	\end{small}
\end{center}
\subsubsection*{\textit{{Case 1: $\lambda$ is even}}}
%\noindent\\
If $\lambda$ is even, there is a very simple condition on $M$ to guarantee that Res($E$,$M$) = det($\mathcal{S}_{M,E}$) be odd.
\begin{proposition}
	\label{prop:even-m-ex}
	Let $E(X) = X^n - \lambda$ such that $\lambda$ is even. Let $M = m_0 + m_1X + \dots + m_{n-1}X^{n-1}$ be a polynomial such that $M \in \mathcal{L}$. Then, Res($E$,$M$) is odd if and only if $m_0$ is odd.
\end{proposition}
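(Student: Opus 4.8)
The plan is to compute the determinant of $\mathcal{S}_{M,E}$ modulo $2$, using Property \ref{rmk:mat_parity}. Since $\lambda$ is even, every occurrence of $-\lambda$ in the Sylvester matrix becomes $0$ modulo $2$, so the reduced matrix $B = \mathcal{S}_{M,E} \bmod 2$ has a block-triangular shape: the top-left $n\times n$ block is the identity, the bottom-left $n\times n$ block is zero, and the bottom-right $n\times n$ block is a lower-triangular Toeplitz matrix whose diagonal entries are all equal to $m_0 \bmod 2$. Hence $\det(B) \equiv (m_0)^n \pmod 2$, which is odd if and only if $m_0$ is odd.

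Concretely, I would first write out $\mathcal{S}_{M,E} \bmod 2$ explicitly, noting that the left half of the matrix (the ``$E$-columns'') collapses: the first $n$ rows contribute a leading $1$ on the diagonal and $0$ elsewhere in those columns (because the only other entries there were the $-\lambda$'s), and the remaining $n$ rows have all-zero entries in the left half. Then I would expand the determinant along the first $n$ columns (Laplace expansion), picking up the identity block with determinant $1$, leaving the determinant of the bottom-right $n\times n$ submatrix formed by the ``$M$-columns'' restricted to the last $n$ rows. That submatrix is lower-triangular with $m_0$ on the diagonal, so its determinant is $m_0^n$. Therefore $\det(\mathcal{S}_{M,E}) \equiv m_0^n \pmod 2$, and since $m_0^n$ is odd exactly when $m_0$ is odd, the claim follows.

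I should double-check that the column/row indexing of the Sylvester matrix as displayed in the paper indeed yields this block structure — in particular that the bottom-right block really is the lower-triangular Toeplitz matrix with $m_0$ on the diagonal and not, say, with $m_{n-1}$; reading off the displayed matrix, the last row in the $M$-columns is $(0,\dots,0,m_0)$ and the rows above fill in $m_1, m_2, \dots$ above the diagonal, confirming the lower-triangular-with-$m_0$-diagonal shape once the $-\lambda$ entries vanish. The only mild subtlety is bookkeeping in the Laplace expansion (tracking the sign, which is $+1$ here since we select the identity block in the natural order), but there is no real obstacle: the evenness of $\lambda$ is precisely what makes the matrix split, so the argument is essentially immediate once the modular reduction is performed. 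The hardest part is simply making the block decomposition airtight rather than any genuine difficulty.
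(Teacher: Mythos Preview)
Your approach is essentially identical to the paper's: reduce $\mathcal{S}_{M,E}$ modulo $2$ via Property~\ref{rmk:mat_parity}, observe that the vanishing of $-\lambda$ makes the matrix triangular, and read the determinant off the diagonal. Two small bookkeeping slips to fix: the Sylvester matrix here is $(2n-1)\times(2n-1)$ (since $\deg E = n$ and $\deg M \le n-1$), so the identity block in the top-left is $(n-1)\times(n-1)$, not $n\times n$; and the bottom-right $n\times n$ block with $m_0$ on the diagonal is \emph{upper}-triangular (entries $m_1,m_2,\dots$ sit \emph{above} the diagonal, as you yourself note), not lower-triangular. Neither slip affects the conclusion $\det \equiv m_0^{\,n}\pmod 2$, and the paper presents exactly this argument, simply phrasing the whole reduced matrix as upper triangular rather than invoking a block decomposition.
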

\begin{proof}
	If $\lambda$ is even, then using property \ref{rmk:mat_parity}, it is obvious that the determinant of $\mathcal{S}_{M,E}$ has the same parity as that of the following matrix, where $\overline{m_i} = m_i\pmod 2$: 
	\begin{center}
		\begin{small}
			$%\mathcal{SM} =
			\begin{pmatrix} 
			1 & 0 & \dots & 0 & \overline{m_{n-1}} & 0 & \dots & 0 & 0\\
			0 & 1 & \dots & 0 & \overline{m_{n-2}} & \overline{m_{n-1}} & \dots & 0 & 0 \\
			\vdots & & \ddots & & \vdots & & & & \vdots \\
			0 & 0 & \dots & 1 & \overline{m_1} & \overline{m_2} & \dots & \overline{m_{n-1}} & 0 \\
			0 & 0 & \dots & 0 & \overline{m_0} & \overline{m_1} & \dots & \overline{m_{n-2}} & \overline{m_{n-1}} \\
			0 & 0 & \dots & 0 & 0 & \overline{m_0} & \dots & \overline{m_{n-3}} & \overline{m_{n-2}} \\
			\vdots & & & & \vdots & & \ddots & & \vdots \\
			0 & 0 & \dots & 0 & 0 & 0 & \dots & \overline{m_0} & \overline{m_1} \\
			0 & 0 & \dots & 0 & 0 & 0 & \dots & 0 & \overline{m_0} \\
			\end{pmatrix}$ 
		\end{small}
	\end{center}
	It is an upper triangular matrix with only the value $1$ or $\overline{m_0}$ on the diagonal and its determinant is $1$ if and only if $m_0$ is odd. 
	Therefore, if $\lambda$ is even, then $M^{-1} \, \text{mod}\, (E, \, \phi)$ exists if and only if $m_0$ is odd.
\end{proof}
\noindent \\
A basis of the lattice $\mathcal{L}$ is:
\begin{center}
	$\mathcal{M}_1 =
	\begin{pmatrix} 
	p & 0 & 0 & \dots & 0 & 0 \\
	t_1 & 1 & 0 & \dots & 0 & 0 \\
	t_2 & 0 & 1 & \dots & 0 & 0 \\
	\vdots & & & \ddots & & \vdots \\
	
	t_{n-2} & 0 & 0 & \dots & 1 & 0 \\
	t_{n-1} & 0 & 0 & \dots & 0 & 1 \\
	\end{pmatrix}$ 
\end{center} 
where  $t_i = (-\gamma)^i \, \text{mod}\, (p)$. 
\\
\textit{Note:} Here, a polynomial is represented by a row vector corresponding to its coefficients, i.e.  if $A(X) = a_0 + a_1X + \dots + a_{n-1}X^{n-1}$, then the corresponding vector is: $(a_0, \,a_1, \,\dots, \,a_ {n-1})$.
\begin{proposition}
	Let $\mathcal{G}$ be a reduced basis of the lattice $\mathcal{L}$ obtained from the basis $\mathcal{M}_1$. Then, at least one (row) vector $\mathcal{G}_i$ of  $\mathcal{G}$ is such that $\mathcal{G}_{i,0}$ is odd, (i.e  $\mathcal{G}_i$ is a suitable candidate for $M$, according to proposition \ref{prop:even-m-ex}). 
\end{proposition}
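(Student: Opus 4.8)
The plan is to exploit the fact that passing from $\mathcal{M}_1$ to a reduced basis $\mathcal{G}$ is a unimodular change of basis, so that the two matrices generate the same lattice $\mathcal{L}$ and their determinants agree up to sign. First I would record that $\det(\mathcal{M}_1) = p$: the matrix $\mathcal{M}_1$ is block lower-triangular with diagonal blocks $p$ and $I_{n-1}$. Hence $\det(\mathcal{G}) = \pm p$, which is odd since $p$ is an odd prime. (Equivalently, $\mathcal{L}$ is the kernel of the surjective homomorphism $\mathbb{Z}^n \to \mathbb{Z}/p\mathbb{Z}$ that sends a coefficient vector to the evaluation at $\gamma$ of the associated polynomial, so $[\mathbb{Z}^n:\mathcal{L}]=p$ is odd.)

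Next I would reduce modulo $2$. Let $\bar{\mathcal{G}}$ be the matrix obtained from $\mathcal{G}$ by replacing each entry with its class mod $2$. By Property~\ref{rmk:mat_parity}, $\det(\bar{\mathcal{G}})$ has the same parity as $\det(\mathcal{G})$, hence is odd; in particular $\det(\bar{\mathcal{G}}) \equiv 1 \pmod 2$, so $\bar{\mathcal{G}}$ is invertible over $\mathbb{F}_2$. Consequently its rows $\bar{\mathcal{G}}_1,\dots,\bar{\mathcal{G}}_n$ form a basis of the $\mathbb{F}_2$-vector space $\mathbb{F}_2^n$.

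Finally I would conclude by contradiction: if the constant term $\mathcal{G}_{i,0}$ were even for every $i$, then every row $\bar{\mathcal{G}}_i$ would have first coordinate $0$, so all $n$ rows would lie in the hyperplane $\{x_0 = 0\}$ of $\mathbb{F}_2^n$, which has dimension $n-1$; they could not form a basis. Therefore at least one $\mathcal{G}_{i,0}$ is odd, and by Proposition~\ref{prop:even-m-ex} the corresponding vector $\mathcal{G}_i$ is a suitable candidate for $M$.

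I do not expect a genuine obstacle in this argument; the step requiring the most care is simply the justification that lattice reduction acts by a unimodular transformation, so that $\det$ is preserved up to sign, together with the clean passage from ``the integer matrix has odd determinant'' to ``the mod-$2$ matrix is invertible'', which is exactly what Property~\ref{rmk:mat_parity} provides. I would also remark explicitly that the LLL-reducedness of $\mathcal{G}$ is irrelevant to the \emph{existence} claim — any $\mathbb{Z}$-basis of $\mathcal{L}$ contains a vector with odd constant term — and that reducedness is only used afterwards to ensure the chosen $\mathcal{G}_i$ additionally has small $\|\cdot\|_\infty$, as wanted in Remark~\ref{rem:smallM}.
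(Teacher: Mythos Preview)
Your proof is correct, but it takes a different and somewhat heavier route than the paper. The paper simply observes that the vector $P=(p,0,\dots,0)\in\mathcal{L}$ must be an integer combination of the $\mathcal{G}_i$; if every $\mathcal{G}_{i,0}$ were even, every such combination would have even first coordinate, contradicting the oddness of $p$. No determinants, no reduction of the whole matrix mod $2$ are needed. Your approach instead computes the covolume $\det(\mathcal{G})=\pm p$, reduces the full matrix modulo $2$, and argues via invertibility over $\mathbb{F}_2$. This is perfectly valid and in fact proves a stronger intermediate fact (the rows of $\bar{\mathcal{G}}$ form an $\mathbb{F}_2$-basis of $\mathbb{F}_2^n$), which is pleasant and foreshadows the technique used later in the odd-$\lambda$ case; but for the bare existence of one row with odd constant term, the paper's one-line ``$(p,0,\dots,0)$ is in the span'' argument is more economical. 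Your closing remark that LLL-reducedness is irrelevant to existence and only serves to make $\|\mathcal{G}_i\|_\infty$ small is a good observation that the paper leaves implicit.
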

\begin{proof}
	$\mathcal{G}$ is a basis of $\mathcal{L}$, so the vector $P = (p, \,0, \,\dots, \,0)$ (i.e  the first line of $\mathcal{M}_1$) is a linear combination (over $\mathbb{Z}$) of elements of $\mathcal{G}$. Now, suppose that the first component of all elements of $\mathcal{G}$ is even. This means that every linear combination  of elements from $\mathcal{G}$  will output a vector whose first component is even. As $p$ is odd, this is in contradiction with the fact that $P \in \mathcal{L}$ and $\mathcal{G}$ is a basis of $\mathcal{L}$. Thus, at least one element $\mathcal{G}_i$ of $\mathcal{G}$ must be such that $\mathcal{G}_{i,0}$ is odd.
\end{proof}

\subsubsection*{\textit{{Case 2: $\lambda$ is odd}}}
%For $\lambda$ odd,  we will first need some definitions and properties.
% described in \cite{permanents}.
%\begin{definition}Let $\Sigma_n$ ne the set of permutations of $\{1,\ldots,n\}$. The permanent of an $n\times n$ matrix $A$ is the number 
%$$\text{Per}(A)=\sum_{\sigma\in\Sigma_n}\prod_{i=1}^na_{i,\sigma(i)}\,.$$
%\end{definition}
%\begin{property}\cite{permanents}
%For any square matrix $M$ defined over $\mathbb{Z}$, det($M$)$\equiv$Per($M$)$\pmod 2$.
%\end{property}
%\noindent\\
\begin{proposition}
	Let $R_n = \mathbb{F}_2[X]/(X^n - 1)$ be the algebra of all polynomials modulo $(X^n - 1)$ over $\mathbb{F}_2$.
	Let $C_n$ be the ring of $n \times n$ binary circulant matrices. \\ 
	Let $\Phi$ be the application defined as follows:
	\begin{center}
		\begin{small}
			$\begin{array}{ccccc}
			\Phi & : & R_n & \to & C_n \\
			& & a_0 + \dots + a_{n-1}X^{n-1} & \mapsto & \begin{pmatrix} 
			a_0 & a_1 & \dots & a_{n-2} & a_{n-1} \\
			a_{n-1} & a_0 & \dots & a_{n-3} & a_{n-2} \\
			\vdots & & \ddots & & \vdots \\
			a_2 & a_3 & \dots & a_0 & a_1 \\
			a_1 & a_2 & \dots & a_{n-1} & a_0 \\
			\end{pmatrix} \\
			\end{array}$ 
		\end{small}
	\end{center}
	Then, $R_n$ is isomorphic to $C_n$ and $\Phi(R_n) = C_n$. 
\end{proposition}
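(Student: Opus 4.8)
The plan is to realize $\Phi$ as ``evaluation at the cyclic shift matrix'' and then invoke the universal property of the quotient ring $R_n = \mathbb{F}_2[X]/(X^n-1)$. Concretely, let $P = \Phi(X) \in \mathcal{M}_{n\times n}(\mathbb{F}_2)$ be the circulant matrix whose first row is $(0,1,0,\dots,0)$; its $(i,j)$ entry is $1$ precisely when $j \equiv i+1 \pmod n$. A direct inspection shows that the $(i,j)$ entry of $P^k$ is $1$ exactly when $j \equiv i + k \pmod n$, so that $\Phi(X^k) = P^k$ for every $k \geqslant 0$; in particular $P^n = I_n$, since $j \equiv i + n \pmod n$ forces $i = j$. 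Note also $\Phi(1) = I_n$.

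First I would check that $\Phi$ is additive: for $a,b \in R_n$ with canonical representatives of degree $< n$, the representative of their sum has coefficients $a_i + b_i$ (no reduction is needed for addition), and circulant matrices with a prescribed first row add entrywise, so $\Phi(a+b) = \Phi(a) + \Phi(b)$. Combining this with $\Phi(X^i) = P^i$ gives $\Phi(a) = \sum_{i=0}^{n-1} a_i P^i = a(P)$ for the representative polynomial $a$. For multiplicativity, write the product in $R_n$ as $c$, where $a(X)b(X) = q(X)(X^n - 1) + c(X)$ in $\mathbb{F}_2[X]$ with $\deg c < n$; evaluating at $P$ and using $P^n = I_n$ yields $\Phi(a)\Phi(b) = a(P)b(P) = q(P)(P^n - I_n) + c(P) = c(P) = \Phi(ab)$. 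Hence $\Phi$ is a ring homomorphism.

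It remains to prove that $\Phi$ is bijective onto $C_n$. For injectivity, observe that the first row of $\Phi(a)$ is exactly $(a_0, a_1, \dots, a_{n-1})$, so $\Phi(a) = 0$ forces $a = 0$ in $R_n$. For surjectivity, recall that a circulant matrix is completely determined by its first row: given $M \in C_n$ with first row $(c_0, \dots, c_{n-1})$, the element $c_0 + c_1 X + \dots + c_{n-1}X^{n-1}$ of $R_n$ maps under $\Phi$ to a circulant matrix with the same first row, hence to $M$. Therefore $\Phi$ is a bijective ring homomorphism, i.e. a ring isomorphism, and $\Phi(R_n) = C_n$ (which in passing confirms that $C_n$ is closed under the matrix operations, so is indeed a ring).

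The only mildly delicate point is the multiplicativity of $\Phi$: one should avoid a brute-force manipulation of double sums of circulant entries and instead use the identification $\Phi(a) = a(P)$ together with $P^n = I_n$, which reduces the claim to the universal property of $\mathbb{F}_2[X]/(X^n-1)$. Everything else is routine bookkeeping about first rows of circulant matrices.
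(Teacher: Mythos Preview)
Your argument is correct: identifying $\Phi(a)$ with $a(P)$ for the cyclic-shift matrix $P$, checking $P^n = I_n$, and then reading off additivity, multiplicativity, injectivity, and surjectivity from the first row is exactly the standard clean route, and every step you wrote holds over $\mathbb{F}_2$.

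As for comparison with the paper: there is essentially nothing to compare. The paper does not prove this proposition at all; it simply cites an external reference (a textbook on QC codes). Your write-up is therefore strictly more informative than what appears in the paper, and the approach you chose---evaluation at $P$ plus the universal property of the quotient---is the natural one and is presumably what the cited reference does as well.
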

\begin{proof}
	see section 3.4 of \cite{baldi2014qc}.
\end{proof}
\begin{corollary}
	\label{rmk:invm}
	Let $A \in R_n$ be a polynomial, the determinant of $\Phi(A)$ is odd if and only if $gcd(A, X^n - 1) = 1$.
\end{corollary}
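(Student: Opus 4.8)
The plan is to reduce the statement to the classical fact that, in $R_n = \mathbb{F}_2[X]/(X^n-1)$, a class $[A]$ is a unit precisely when $\gcd(A,X^n-1)=1$, and then transport this through the ring isomorphism $\Phi$ of the preceding proposition. The first step is to translate ``$\det\Phi(A)$ is odd'' into a statement over $\mathbb{F}_2$. Since the entries of $\Phi(A)$ lie in $\{0,1\}$, Property \ref{rmk:mat_parity} (equivalently: reduction modulo $2$ is a ring homomorphism and the determinant is a polynomial in the entries) shows that $\det\Phi(A)$ is odd if and only if the determinant of $\Phi(A)$, computed over the field $\mathbb{F}_2$, is nonzero, i.e.\ if and only if $\Phi(A)$ is invertible as a matrix over $\mathbb{F}_2$. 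So it suffices to prove: $\Phi(A)$ is invertible over $\mathbb{F}_2$ if and only if $\gcd(A,X^n-1)=1$.

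For the ``if'' direction, assume $\gcd(A,X^n-1)=1$. Then there is $B\in\mathbb{F}_2[X]$ with $AB\equiv 1\pmod{X^n-1}$, that is, $[A][B]=[1]$ in $R_n$. Applying the ring isomorphism $\Phi$, and noting that $\Phi$ sends the identity of $R_n$ to $I_n$, we get $\Phi(A)\,\Phi(B)=I_n$; hence $\Phi(A)$ is invertible over $\mathbb{F}_2$, so $\det\Phi(A)=1$ is odd.

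For the converse I would argue by contraposition. Suppose $d:=\gcd(A,X^n-1)$ is non-constant, and write $X^n-1=d\,h$ in $\mathbb{F}_2[X]$ with $1\leqslant\deg h\leqslant n-1$, so that $[h]\neq[0]$ in $R_n$. Since $d$ divides $A$, the polynomial $Ah$ is divisible by $dh=X^n-1$, hence $[A][h]=[0]$ with $[h]\neq[0]$: the class $[A]$ is a zero divisor in $R_n$. Applying $\Phi$ yields $\Phi(A)\,\Phi(h)=0$ with $\Phi(h)\neq 0$, so $\Phi(A)$ cannot be invertible over $\mathbb{F}_2$; therefore $\det\Phi(A)=0$ is even. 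Combining the two directions gives the equivalence.

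The argument is essentially routine once $\Phi$ is known to be a ring isomorphism; the only points that need care are the passage between ``odd as an integer'' and ``nonzero in $\mathbb{F}_2$'' for the determinant (covered by Property \ref{rmk:mat_parity}) and the choice of the annihilator $h$, where the bound $\deg h\leqslant n-1$ is exactly what guarantees $[h]\neq[0]$ in $R_n$. I do not anticipate a genuine obstacle here, since this corollary is just the matrix reformulation, via $\Phi$, of the standard unit/zero-divisor dichotomy in $\mathbb{F}_2[X]/(X^n-1)$.
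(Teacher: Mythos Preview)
Your proof is correct and follows essentially the same approach as the paper: both arguments reduce the question to the standard equivalence ``$A$ is a unit in $R_n$ iff $\gcd(A,X^n-1)=1$'', transport invertibility through the ring isomorphism $\Phi$, and identify invertibility of a binary matrix with oddness of its determinant. The paper's version is simply more terse, stating the two ingredients and invoking the isomorphism, while you spell out both directions explicitly and make the passage from integer parity to $\mathbb{F}_2$ nonvanishing via Property~\ref{rmk:mat_parity} overt.
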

\begin{proof}
	$A$ is invertible if and only if $gcd(A, X^n - 1) = 1$. An element in $C_n$ is invertible if and only if its determinant is odd. As $R_n$ is isomorphic to $C_n$ through $\Phi$, this concludes the proof.
\end{proof}
%\begin{definition}
%\label{def:weight-pol}
%Let $P \in \mathbb{Z}[X]$ be a polynomial. \\
%We call the weight of $P$ the number of odd elements amongst its coefficients.\\ We notate $w(P)$ the weight of $P$. 
%\end{definition}
%\begin{proposition}
%\label{prop:iso-wei}
%Let $P \in \mathbb{Z}[X]$ be a polynomial such that $P(X) = p_0 + p_1X + \dots + p_{n-1}X^{n-1}$.\\
%Then, $w(P) = w(\overline P)$ and the determinants of $\Phi(P)$ and $\Phi(\overline P)$ have the same parity.
%\end{proposition}
%The demonstration of this proposition is straightforward. \\
%\noindent
When $\lambda$ is odd, the determinant of $\mathcal{S}_{M,E}$ has the same parity than the determinant of the following matrix  (using  property \ref{rmk:mat_parity}), where $\overline{m_i} = m_i\pmod 2$: 
\begin{center}
	\begin{small}
		$%\mathcal{SM} =
		\begin{pmatrix} 
		1 & 0 & \dots & 0 & \overline{m_{n-1}} & 0 & \dots & 0 & 0\\
		0 & 1 & \dots & 0 & \overline{m_{n-2}} & \overline{m_{n-1}} & \dots & 0 & 0 \\
		\vdots & & \ddots & & \vdots & & & & \vdots \\
		0 & 0 & \dots & 1 & \overline{m_1} & \overline{m_2} & \dots & \overline{m_{n-1}} & 0 \\
		0 & 0 & \dots & 0 & \overline{m_0} & \overline{m_1} & \dots & \overline{m_{n-2}} & \overline{m_{n-1}} \\
		-1 & 0 & \dots & 0 & 0 & \overline{m_0} & \dots & \overline{m_{n-3}} & \overline{m_{n-2}} \\
		\vdots & & & & \vdots & & \ddots & & \vdots \\
		0 & 0 & \dots & 0 & 0 & 0 & \dots & \overline{m_0} & \overline{m_1} \\
		0 & 0 & \dots & -1 & 0 & 0 & \dots & 0 & \overline{m_0} \\
		\end{pmatrix}$ 
	\end{small}
\end{center}
Using the fact that the addition of one row to another row doesn't change the value of the determinant, one deduces that the determinant of the preceding matrix is the same than the determinant of the matrix: 
\begin{center}
	\begin{small}
		$%\mathcal{SM} =
		\begin{pmatrix} 
		1 & 0 & \dots & 0 & \overline{m_{n-1}} & 0 & \dots & 0 & 0\\
		0 & 1 & \dots & 0 & \overline{m_{n-2}} & \overline{m_{n-1}} & \dots & 0 & 0 \\
		\vdots & & \ddots & & \vdots & & & & \vdots \\
		0 & 0 & \dots & 1 & \overline{m_1} & \overline{m_2} & \dots & \overline{m_{n-1}} & 0 \\
		0 & 0 & \dots & 0 & \overline{m_0} & \overline{m_1} & \dots & \overline{m_{n-2}} & \overline{m_{n-1}} \\
		0 & 0 & \dots & 0 & \overline{m_{n-1}} & \overline{m_0} & \dots & \overline{m_{n-3}} & \overline{m_{n-2}} \\
		\vdots & & & & \vdots & & \ddots & & \vdots \\
		0 & 0 & \dots & 0 & \overline{m_2} & \overline{m_3} & \dots & \overline{m_0} & \overline{m_1} \\
		0 & 0 & \dots & 0 & \overline{m_1} & \overline{m_2} & \dots & \overline{m_{n-1}} & \overline{m_0} \\
		\end{pmatrix}$ 
	\end{small}
\end{center}
This matrix has the same determinant as the matrix:
\begin{center}
	\begin{small}
		$\mathcal{H} =
		\begin{pmatrix} 
		\overline{m_0} & \overline{m_1} & \dots & \overline{m_{n-2}} & \overline{m_{n-1}} \\
		\overline{m_{n-1}} & \overline{m_0} & \dots & \overline{m_{n-3}} & \overline{m_{n-2}} \\
		\vdots & & \ddots & & \vdots \\
		\overline{m_2} & \overline{m_3} & \dots & \overline{m_0} & \overline{m_1} \\
		\overline{m_1} & \overline{m_2} & \dots & \overline{m_{n-1}} & \overline{m_0} \\
		\end{pmatrix}$ 
	\end{small}
\end{center}
Therefore, if $\lambda$ is odd, the determinant of $\mathcal{S}_{M,E}$ and $\mathcal{H}$ have the same parity.
\begin{definition}
	\label{def:over-pol}
	Let $P \in \mathbb{Z}[X]$ be a polynomial such that $P(X) = p_0 + p_1X + \dots + p_{n-1}X^{n-1}$.
	We denote by $\overline P$ the polynomial in $R_n$ such that: $\overline P(X) = p'_0 + p'_1X + \dots + p'_{n-1}X^{n-1}$ where $p'_i = p_i \, \text{\rm mod} \,(2)$.
\end{definition}
\begin{proposition}
	\label{prop:odd-m-ex}
	Let $E(X) = X^n - \lambda$ such that $\lambda$ is odd. Let $M = m_0 + m_1X + \dots + m_{n-1}X^{n-1}$ be a polynomial such that $M \in \mathcal{L}$. Then $M^{-1} \, \text{mod}\, (E, \, \phi)$ exists if and only if $gcd(\overline M, X^n - 1) = 1$.
\end{proposition}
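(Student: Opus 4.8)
The plan is to prove the two implications of the equivalence separately, reusing the parity reduction of the Sylvester matrix $\mathcal{S}_{M,E}$ carried out just before the statement, together with the circulant-matrix dictionary of Corollary~\ref{rmk:invm}. In outline, the two directions glue into the chain: $M^{-1}\bmod(E,\phi)$ exists if and only if $\overline M$ is invertible in $\mathbb{F}_2[X]/(X^n-1)$, which in turn holds if and only if $\gcd(\overline M,X^n-1)=1$.

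First I would treat ``$\gcd(\overline M,X^n-1)=1 \Rightarrow M^{-1}\bmod(E,\phi)$ exists''. Observing that the matrix $\mathcal{H}$ built above is exactly $\Phi(\overline M)$, Corollary~\ref{rmk:invm} turns the hypothesis $\gcd(\overline M,X^n-1)=1$ into the assertion that $\det\mathcal{H}$ is odd. Since $\lambda$ is odd, the computation preceding the proposition (Property~\ref{rmk:mat_parity} and the row operations on the Sylvester matrix) gives that $\det\mathcal{S}_{M,E}$ has the same parity as $\det\mathcal{H}$, hence $\mathrm{Res}(E,M)=\det\mathcal{S}_{M,E}$ is odd. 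Because $\phi$ is a power of two, ``odd'' is the same as ``coprime to $\phi$'', so $\gcd(\mathrm{Res}(E,M),\phi)=1$, and the existence criterion proved above then supplies a polynomial $M'\in\mathbb{Z}[X]$ with $M'M\equiv 1\pmod{(E,\phi)}$.

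Next I would treat the converse ``$M^{-1}\bmod(E,\phi)$ exists $\Rightarrow \gcd(\overline M,X^n-1)=1$'' by reducing modulo $2$. If $M'$ satisfies $M'M\equiv 1\pmod{(E,\phi)}$, write $M'M=1+EA+\phi B$ with $A,B\in\mathbb{Z}[X]$; since $\phi$ is a power of two, every coefficient of $\phi B$ is even, so reducing this identity modulo $2$ yields $\overline M\,\overline{M'}\equiv 1\pmod{\overline E}$ in $\mathbb{F}_2[X]$. As $\lambda$ is odd, $\overline E = X^n-\lambda \equiv X^n-1 \pmod 2$, so $\overline M$ is a unit of $R_n=\mathbb{F}_2[X]/(X^n-1)$; in the principal ideal domain $\mathbb{F}_2[X]$ a residue class is a unit precisely when its representative is coprime to the modulus, whence $\gcd(\overline M,X^n-1)=1$.

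The step I would expect to be the main obstacle — the parity reduction of the Sylvester determinant $\mathrm{Res}(E,M)=\det\mathcal{S}_{M,E}$ down to the $n\times n$ binary circulant $\mathcal{H}=\Phi(\overline M)$ — has fortunately already been dispatched in the paragraphs preceding the statement, so the proof itself only has to string together Corollary~\ref{rmk:invm}, that parity fact, and the existence criterion. The remaining care is purely bookkeeping: invoking ``$\phi$ is a power of two'' in both directions (to identify oddness with invertibility modulo $\phi$, and to discard $\phi B$ when passing to $\mathbb{F}_2[X]$), and the congruence $X^n-\lambda\equiv X^n-1\pmod 2$, which is exactly what permits $\overline E$ to be replaced by the circulant modulus $X^n-1$ so that Corollary~\ref{rmk:invm} applies.
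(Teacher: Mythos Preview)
Your proof is correct and, for the forward implication, follows the paper's argument verbatim: identify $\mathcal{H}=\Phi(\overline M)$, apply Corollary~\ref{rmk:invm}, use the parity reduction of $\det\mathcal{S}_{M,E}$ to $\det\mathcal{H}$ already carried out, and invoke the existence criterion.

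For the converse, your route differs from the paper's. The paper simply writes ``$\mathrm{Res}(E,M)$ is odd iff $\gcd(\overline M,X^n-1)=1$. That is, $M^{-1}\bmod(E,\phi)$ exists if and only if $\gcd(\overline M,X^n-1)=1$'', tacitly identifying ``$\mathrm{Res}(E,M)$ odd'' with ``$M$ invertible modulo $(E,\phi)$'' in both directions, even though the existence criterion stated earlier only supplies the sufficiency. You instead bypass the resultant entirely for this direction: starting from an identity $M'M=1+EA+\phi B$, you reduce modulo $2$, kill $\phi B$ because $\phi$ is a power of two, replace $\overline E$ by $X^n-1$ because $\lambda$ is odd, and read off that $\overline M$ is a unit in $R_n$. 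This is a clean, self-contained argument that actually closes the small gap left by the paper's appeal to an unstated converse; conversely, the paper's route has the virtue of making the whole equivalence factor through the single scalar $\mathrm{Res}(E,M)$, which is conceptually tidy once one accepts that invertibility modulo $(E,\phi)$ is equivalent to $\gcd(\mathrm{Res}(E,M),\phi)=1$.
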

\begin{proof}
	The circulant matrix $\mathcal{H}$, defined above, is such that $\mathcal{H} = \Phi(\overline M)$. As $\overline M \in R_n$, the determinant of $\mathcal{H}$ is odd iff $gcd(\overline M, X^n - 1) = 1$, using corollary \ref{rmk:invm}.
	If $\lambda$ is odd, we showed above that the determinant of $\mathcal{S}_{M,E}$ and $\mathcal{H}$ have the same parity. So, $Res(E,M)$ is odd iff $gcd(\overline M, X^n - 1) = 1$. That is, $M^{-1} \, \text{mod}\, (E, \, \phi)$ exists if and only if $gcd(\overline M, X^n - 1) = 1$.
\end{proof}
\noindent\\
With a well chosen basis of $\mathcal{L}$, we are now going to show that there always exists a small vector $M$ such that $gcd(\overline M, X^n - 1) = 1$. \\
Let $\mathcal{M}_2$ be a basis of the lattice $\mathcal{L}$ such that:
\begin{center}
	$\mathcal{M}_2 =
	\begin{pmatrix} 
	p & 0 & 0 & \dots & 0 & 0 \\
	s_1 & 1 & 0 & \dots & 0 & 0 \\
	s_2 & 0 & 1 & \dots & 0 & 0 \\
	\vdots & & & \ddots & & \vdots \\
	s_{n-2} & 0 & 0 & \dots & 1 & 0 \\
	s_{n-1} & 0 & 0 & \dots & 0 & 1 
	\end{pmatrix}$ 
\end{center} 
where: $s_i = t_i + p.k_i$ with $t_i = (-\gamma)^i \, \pmod p$ and $k_i = t_i \pmod 2$.
Notice that all $s_i$ are even (as $p$ is odd).
\begin{proposition}
	Let $\mathcal{G}=\{\mathcal{G}_1,\ldots,\mathcal{G}_n\}$ be a reduced basis of the lattice $\mathcal{L}$ obtained from the basis $\mathcal{M}_2$. Then, there exists a linear combination 
	$(\beta_1,\ldots,\beta_n)$ with $\beta_i\in\{0,1\}$ such that $M=\sum_{i=1}^n\beta_i\mathcal{G}_i$ satisfies gcd$(\overline M, X^n - 1) = 1$. Hence  $M^{-1} \, \text{mod}\, (E, \, \phi)$ exists.
\end{proposition}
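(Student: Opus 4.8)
The plan is to reduce the statement to a counting/pigeonhole argument over the ring $R_n = \mathbb{F}_2[X]/(X^n-1)$, exactly as in the even case but now tracking the image of the whole reduced basis rather than a single coefficient. First I would observe that since $\mathcal{G} = \{\mathcal{G}_1,\ldots,\mathcal{G}_n\}$ is a basis of $\mathcal{L}$, every vector of $\mathcal{L}$ — in particular each row $\mathcal{M}_{2,j}$ of the generating matrix $\mathcal{M}_2$ — is a $\mathbb{Z}$-linear combination of the $\mathcal{G}_i$. Reducing modulo $2$, the reductions $\overline{\mathcal{G}_1},\ldots,\overline{\mathcal{G}_n}$ span, over $\mathbb{F}_2$, the image $\overline{\mathcal{L}}$ of $\mathcal{L}$ in $R_n$ (identifying length-$n$ coefficient vectors with elements of $R_n$). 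So the key point is to identify $\overline{\mathcal{L}}$ precisely and show it contains a unit of $R_n$; then since the $\overline{\mathcal{G}_i}$ span $\overline{\mathcal{L}}$ over $\mathbb{F}_2$, that unit is an $\mathbb{F}_2$-combination $\sum \beta_i \overline{\mathcal{G}_i}$ with $\beta_i \in \{0,1\}$, and $M = \sum \beta_i \mathcal{G}_i$ satisfies $\overline{M} = \sum \beta_i \overline{\mathcal{G}_i}$, hence $\gcd(\overline{M}, X^n-1)=1$; Proposition \ref{prop:odd-m-ex} then gives the existence of $M^{-1} \bmod (E,\phi)$.

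Next I would compute $\overline{\mathcal{L}}$ using the special shape of $\mathcal{M}_2$. By construction the rows of $\mathcal{M}_2$ are $(p,0,\ldots,0)$ and $(s_i, e_i)$ for $i=1,\ldots,n-1$, where $e_i$ is the $i$-th standard basis vector and each $s_i = t_i + p k_i$ is \emph{even}. Hence modulo $2$ the first row becomes $(1,0,\ldots,0)$ (as $p$ is odd) and each remaining row becomes simply $\overline{e_i} = X^i$ in $R_n$. Therefore $\overline{\mathcal{L}}$ contains $1, X, X^2, \ldots, X^{n-1}$, i.e. $\overline{\mathcal{L}} = R_n$ as an $\mathbb{F}_2$-vector space. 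In particular $\overline{\mathcal{L}}$ contains the constant polynomial $1$, which is a unit of $R_n$ and trivially satisfies $\gcd(1, X^n-1)=1$. This is the whole content of why the basis $\mathcal{M}_2$ was chosen with the even-correction term $p k_i$: it forces the mod-$2$ image of the non-leading rows to be pure monomials, so that the lattice surjects onto $R_n$ modulo $2$.

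Finally I would assemble the argument: since $\overline{\mathcal{G}_1},\ldots,\overline{\mathcal{G}_n}$ generate $\overline{\mathcal{L}} = R_n$ over $\mathbb{F}_2$ (being the mod-$2$ reduction of a $\mathbb{Z}$-basis of $\mathcal{L}$) and $1 \in R_n$, there exist $\beta_i \in \mathbb{F}_2 = \{0,1\}$ with $\sum_{i=1}^n \beta_i \overline{\mathcal{G}_i} = 1$; lifting, $M := \sum_{i=1}^n \beta_i \mathcal{G}_i \in \mathcal{L}$ has $\overline{M} = 1$ in $R_n$, so $\gcd(\overline{M}, X^n-1) = 1$ and by Proposition \ref{prop:odd-m-ex} the inverse $M^{-1} \bmod (E,\phi)$ exists. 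I expect the only delicate point to be the bookkeeping that "$\mathbb{Z}$-basis of $\mathcal{L}$ reduces mod $2$ to an $\mathbb{F}_2$-spanning set of $\overline{\mathcal{L}}$" and that $\overline{\mathcal{L}}$ is genuinely all of $R_n$ rather than a proper subspace — this rests entirely on the evenness of every $s_i$, so I would state that observation explicitly. (Note also that one does \emph{not} in general get that a single $\mathcal{G}_i$ works, unlike the even case; the freedom to take a $\{0,1\}$-combination of the reduced basis is essential, which is why the statement is phrased that way, and the resulting $M$ is still "small" since it is a sum of at most $n$ short reduced-basis vectors.)
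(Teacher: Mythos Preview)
Your proof is correct and follows essentially the same approach as the paper: both arguments hinge on the observation that, because every $s_i$ is even and $p$ is odd, the mod-$2$ reduction $\overline{\mathcal{M}_2}$ is the identity matrix, so $\overline{\mathcal{L}} = R_n$; then any $\mathbb{Z}$-basis of $\mathcal{L}$ (in particular $\mathcal{G}$) reduces to an $\mathbb{F}_2$-spanning set of $R_n$, and one lifts a unit of $R_n$ to a $\{0,1\}$-combination $M$ of the $\mathcal{G}_i$. The only cosmetic difference is that the paper works with an arbitrary unit $U \in R_n$ with $\gcd(U,X^n-1)=1$ and traces it back through $\mathcal{M}_2$ and then $\mathcal{G}$, whereas you pick the concrete unit $U=1$ and argue the spanning property directly; your version is marginally more direct but otherwise identical in substance.
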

\begin{proof}
	First, we have that: 
	\begin{center}
		$\overline{\mathcal{M}_2} =
		\begin{pmatrix} 
		1 & 0 & 0 & \dots & 0 & 0 \\
		0 & 1 & 0 & \dots & 0 & 0 \\
		0 & 0 & 1 & \dots & 0 & 0 \\
		\vdots & & & \ddots & & \vdots \\
		0 & 0 & 0 & \dots & 1 & 0 \\
		0 & 0 & 0 & \dots & 0 & 1
		\end{pmatrix}$ 
	\end{center} 
	where $\overline{\mathcal{M}_2}_{ij} = {\mathcal{M}_2}_{ij}\pmod 2$. Each line $i$ of $\overline{\mathcal{M}_2}$ corresponds to the polynomial $X^i \in R_n$, for $0 \leqslant i < n$. This means that $\overline{\mathcal{M}_2}$ is a basis of $R_n$. \\
	Let $U \in R_n$ be a polynomial such that $gcd(U, X^n -1) = 1$.
	As $\overline{\mathcal{M}_2}$ is a basis of $R_n$, it exists $T = (t_1, \dots, t_{n}) \in \mathbb{F}_2^n$ such that $U = T.\overline{\mathcal{M}_2}$. As $T.\overline{\mathcal{M}_2} = \overline{T.\mathcal{M}_2}$, we obtain that $U = \overline{T.\mathcal{M}_2}$. \\
	We have $T.\mathcal{M}_2 \in \mathcal{L}$, so there exists $V = (v_1, \dots, v_{n}) \in \mathbb{Z}^n$ such that $V.\mathcal{G} = T.\mathcal{M}_2$, as $\mathcal{G}$ is a basis of $\mathcal{L}$. Thus, $U = \overline{V.\mathcal{G}}$. \\
	Let $\beta = (\beta_1, \dots, \beta_n) \in \mathbb{F}_2^n$ such that $\beta_i = v_i \pmod 2$, then we have $U = \overline{\beta.\mathcal{G}}$. \\
	Let $M \in \mathcal{L}$ be a polynomial such that $M = \sum_{i=1}^n \beta_i\mathcal{G}_i$, then $\overline{M} = U$, hence gcd$(\overline{M}, X^n -1) = 1$. Using proposition \ref{prop:odd-m-ex}, this proves that  $M^{-1} \, \text{mod}\, (E, \, \phi)$ exists.
\end{proof}
\noindent 
\begin{remark}
	From the preceeding proposition, at most $2^n$ linear combinations of elements of $\mathcal{G}$ must be computed in order to find a suitable polynomial $M$.
	For cryptographic sizes, $n$ is small enough so that one can check all these combinations  (see examples in Appendix \ref{annex:amns}, for some possible values of $n$).
	Moreover, from proposition \ref{prop:odd-m-ex}, if a combination leads to a polynomial  $M$ with an even number of odd coefficients, then $gcd(\overline M, X^n - 1)$ is a multiple of $X-1$ (as $1$ is a root of $\overline M$ in this case). Thus, $M^{-1} \, \text{mod}\, (E, \, \phi)$ doesn't exist. Hence, such polynomials $M$ can be discarded.
\end{remark}
\begin{remark}
	Let $\theta= \max\limits_{0 \leqslant i < n} \|\mathcal{G}_i\|_{\infty}$. For any of the $2^n$ linear  combinations, the corresponding polynomial $M$ verifies $\|M\|_{\infty} \leqslant n\theta$. Thus, if elements of $\mathcal{G}$ are small, then $\|M\|_{\infty}$ will also be small, as $n$ is small (and negligible compared to $\theta$). 
\end{remark} 
\subsection{Computation of representatives of powers of $\rho$} \label{subsec:computRho}
%\noindent\\
To compute the representatives of $\mathit{\rho^i}$, for $1 \leqslant i < n$, needed for Algorithm \ref{alg:conv_to_amns}, 
we first use Algorithm \ref{alg:conv_to_amns2}, with $\rho$ as input, to find a representative $R$ of $\rho.\phi$ in the AMNS. Then, using RedCoeff with $R$ as input, we obtain a representative of $\rho$. For any $i \geqslant 2$, we compute a representative of $\rho^i$ by multiplying a representative of $\rho^{i-1}$ by $R$, using Algorithm \ref{alg:amns_mult}.

\section{Implementation results}

\subsection{Theoretical performances and memory consumption}
\label{subsec:th-perf-mem}
The performances and the memory consumption of an AMNS depend mainly on the target architecture and the value of $n$ which corresponds to the number of terms of the underlying polynomials of the AMNS. The time and space complexity of the AMNS operations mainly depend on $n$.

Let's consider that we have a $k$-bits processor architecture.
In order to obtain an efficient AMNS, a good idea is to take $\phi = 2^k$ because in that case, division and modular reduction by $\phi$ can be done with very simple mask and shift operations. This choice makes the reduction modulo $\phi$ at line 1 and the division at line 3 of algorithm \ref{alg:redcoeff} very cheap. This way, $nk$ bits will be used to store the AMNS representative of an integer. The product of two elements involves $\mathcal{O}(n^2)$ arithmetic operations over $k$-bits integers and, using the trick mentionned in subsection \ref{subsec:addition}, adding two elements needs exactly $n$ additions of $k$-bits integers.

\subsection{Practical performances and memory consumption}
\label{subsec:pr-perf-mem}
\subsubsection{About implementation: a C code generator.}
Despite all the parameters needed for arithmetic operations in an AMNS, its software implementation is quite easy.
To prove that, we wrote with the SageMath library \cite{sagemath} a code that generates a C code for any AMNS, given its complete set of parameters. \\
Our implementations of AMNS generation, the C code generator and the AMNS we used for our numerical experimentation below are avalaible on {GitHub:}\\\verb+https://github.com/eacElliptic/AMNS+.

\subsubsection{Computer features.}
For our tests we used a Dell Precision Tower 3620 on Ubuntu gnome 16.04-64 bits with  an Intel Core i7-6700 processor and 32GB RAM. We compiled our tests with gcc 5.4 using O3 compiling option and we compared our results to GNU MP 6.1.1 and OpenSSL 1.0.2g implementations. These libraries have also been compiled with gcc with O3 option.

\subsection{Numbers of AMNS for a given prime}
An interesting but complex question is how many AMNS can be generated given a prime number for a target architecture. This question is difficult to answer because of the large range of parameters that define an AMNS and also because it is linked to the existence of a nth-root $\gamma$ of a given $\lambda$ in $\mathbb{Z}/p\mathbb{Z}$.\\
Here, we give an answer while focusing on the efficiency of arithmetic operations. This will lead us to add some constraints which will of course reduce the number of AMNS.

Let us first assume that we have a $k$-bits architecture. Once $p$ is known, we have to choose the parameters $n$, $\phi$ and $\lambda$.
For any AMNS, we have: $\phi \geqslant 2|\lambda|n\rho$ \, and \, $(2\rho)^n \geqslant p$.
So, $\log_{2}|\lambda| \leqslant \log_{2}\phi - \log_{2}n - (\log_{2}p)/n$.
As mentionned in subsection \ref{subsec:th-perf-mem}, 
for performance reasons, we choose $\phi = 2^k$. Hence, $n > \frac{\log_{2}p}{k}$ (as $\phi > \rho$). Remember that $n$ is the number of $k$-bits words used to represent an integer. Hence we want $n$ as small as possible to minimize the computation. 
Let's assume that we choose $n$ such that $$\frac{\log_{2}p}{k} + 1 + c \geqslant n > \frac{\log_{2}p}{k},$$ 
i.e we allow at max $c$ more coefficients than the optimal value $\lfloor\frac{\log_{2}p}{k}\rfloor + 1$. According to subsection \ref{subsec:th-perf-mem}, the smaller $c$ is, the better performances and memory consumption will be. Therefore, $\log_{2}n > \log_{2}\log_{2}p - \log_{2}k$ and $\frac{\log_{2}p}{n} \geqslant \frac{k\log_{2}p}{\log_{2}p + kc + k}$. So, we finally have: 
\begin{center}
$\log_{2}|\lambda| < k + \log_{2}k - \log_{2}\log_{2}p - \frac{k\log_{2}p}{\log_{2}p + k.c + k}$
\end{center}
Let $\Omega = k + \log_{2}k - \log_{2}\log_{2}p - \frac{k\log_{2}p}{\log_{2}p + kc + k}$. To efficiently compute the reduction modulo $X^n-\lambda$ (external reduction process), we choose $\lambda$ such that $\lambda = \pm 2^i \pm 2^j$, where $0 \leqslant i,j < \Omega$ and $i \neq j$.
Consequently, the number of values for $\lambda$ is $4  \binom{\Omega}{2}$. 

Now, the main difficulty is to know how many nth-root modulo $p$ can be computed with this set of values for $\lambda$. Notice that such a root may not exist for some values of $\lambda$. To give an answer, we distinguish two case according to $\text{gcd}(n, p-1)$.
%\vspace{-4mm}
\subsubsection{Case 1 : gcd$(n,p-1) = 1$.} Using corollary \ref{coro:lmbd}, we obtain that any value $\lambda \in (\mathbb{Z}/p\mathbb{Z})^{*}$, except $1$, gives a useful nth-root modulo $p$. So, in this case, one can generate at least $4 \binom{\Omega}{2} - 1$ AMNS. 
\subsubsection{Case 2 : gcd$(n,p-1) > 1$.} This case is complex because finding whether or not a value $\lambda \neq 1$ has a nth-root modulo $p$ could lead to solve a hard instance of the discrete logarithm problem (see proposition \ref{prop-gmm}). But, using corollary \ref{coro:lmbd1}, we obtain that one can generate at least gcd$(n, p-1) - 1$ AMNS (taking $\lambda = 1$).
\begin{remark}
In both cases above, we gave the minimum numbers of AMNS that can be generated. 
Indeed, once $\lambda$ and $\gamma = \lambda^{1/n} \pmod p$ are fixed, the next parameter to compute is the polynomial $M$ using lattice reduction. From subsection \ref{subsec:genparam}, at least one polynomial in a reduced basis of the lattice satisfies the required constraints on $M$. In fact, our practical experimentations have shown that there are, most of the time, more than one suitable candidate for $M$. Moreover, some linear combinations of the polynomials of the reduced basis give suitable candidates for $M$.
For a tuple $(p, n, \lambda, \gamma)$, different suitable candidates for $M$ lead to different AMNS. So, one can generate much more AMNS than the minimum numbers we gave using some linear combinations of the polynomials of the reduced basis.
\end{remark} 
\begin{example}
\label{expl:nb-amns}
We generated a set of AMNS for some primes of size 192, 224, 256, 384 and 521 bits. These sizes correspond to the NIST recommended key sizes for elliptic curve cryptography. For this experiment, we took $k=64$ and $c=2$ (so, we allow at max $2$ more coefficients than the optimal value $\lfloor\frac{\log_{2}p}{64}\rfloor + 1$, which is quite restrictive but good for performances). Remember that $\Omega = k + \log_{2}k - \log_{2}\log_{2}p - \frac{k\log_{2}p}{\log_{2}p + kc + k}$. \\
As already said, the main difficulty in the generation process is the computation of $\gamma$ a nth-root modulo $p$ of $\lambda$. For our test, we used the library SageMath which uses the algorithm proposed in \cite{nthroot} for this operation. As it can take a lot of time, we fixed in our code a timeout of thirty minutes. So, some values of $\lambda$ that have nth-roots might have been discarded. Finally, to extend the numbers of AMNS according to the remark above, we only checked all binary combinations of lattices elements; a larger set should lead to more AMNS. \\ 
With these parameters and constraints, the table \ref{tab:nb-amns} gives the number of AMNS we found for each prime. We call these primes p192, p224, p256, p384 and p521 according to their bit sizes (see Appendix \ref{annex:prime-list} for their values). The recommended NIST prime $p = 2^{521} - 1$ is denoted by nist\_p521.

\begin{table}[h]
\renewcommand{\arraystretch}{1.4}
\begin{center}
\small
\begin{tabular}{ll}
\begin{tabular}{|l|c|c|c|}
\hline
Prime number & p192 & p224 & p256 \\
\hline
Number of AMNS & $10418$ & $5118$ & $11877$ \\
\hline
\end{tabular}
\\\\
\begin{tabular}{|l|c|c|c|}
\hline
Prime number & p384 & p521 & nist\_p521 \\
\hline
Number of AMNS & $14787$ & $19871$ & $85592$ \\
\hline
\end{tabular}
\end{tabular}
\end{center}
\caption{\label{tab:nb-amns}A lower bound on the number of distinct AMNS for some prime integers, using a timer of 30 minutes for nth-root computation.}
\end{table}
\end{example}

\subsubsection{Numerical experimentation.}
For each NIST recommended key size for elliptic curve cryptography (i.e 192, 224, 256, 384 and 521 bits), we have generated a set of AMNS for many primes of that size, see Appendix \ref{annex:amns} for some exemples of AMNS.
As in example \ref{expl:nb-amns}, we took $k = 64$ and $\phi = 2^k$. So, for software implementation, the optimal value for $n$ is $n_{opt} = \lfloor\frac{\log_{2}p}{k}\rfloor + 1$. We also took $c = 2$, which means that for each prime, we have generated AMNS with $n$ equals to $n_{opt}$, $n_{opt} + 1$ and $n_{opt} + 2$. Then, with each AMNS, we have computed $2^{25}$ modular multiplications using the AMNS representatives of random elements. With the same inputs, we have also computed the $2^{25}$ modular multiplications with the well known librairies GNU MP \cite{gnu_mp} and OpenSSL \cite{openssl}. For OpenSSL, we used the default modular multiplication procedure and Montgomery modular multiplication already implemented. \\

Table \ref{tab:perf} gives the ratio between the performances obtained for the AMNS and the performances obtained for GNU MP and OpenSSL. We computed these ratios for $n$ equals to $n_{opt}$, $n_{opt} + 1$ and $n_{opt} + 2$. 

Table \ref{tab:mem} gives the memory consumption to store an integer modulo $p$ using the {\small AMNS}, {\small GNU MP} and {\small OpenSSL}, where $n = n_{opt}$ for the AMNS. More precisely, we give the number of $64$-bits integers used to represent the elements of $\mathbb{Z}/p\mathbb{Z}$.
For GNU MP and OpenSSL, we give in Appendix \ref{annex:memory} the source code corresponding to the structure used for storing an integer with these libraries.  

\begin{table}[h]
\footnotesize
\begin{tabular}{ll}
\begin{tabular}{|l|c|c|c|c|c|c|c|c|c|}
\hline
$p$ size & \multicolumn{3}{c|}{192} & \multicolumn{3}{c|}{224} \\
\hline
\textbf{n} & \textbf{4} & \textbf{5} & \textbf{6} & \textbf{4} & \textbf{5} & \textbf{6} \\
\hline
ratio 1 & \textbf{0.86} & 1.41 & 2.04 & \textbf{0.57} & \textbf{0.98} & 1.41 \\
\hline
ratio 2 & 0.10 & 0.17 & 0.24 & 0.08 & 0.14 & 0.19 \\
\hline
ratio 3 & 0.21 & 0.34 & 0.49 & 0.16 & 0.27 & 0.39 \\
\hline    
\end{tabular}
\\\\
\begin{tabular}{|l|c|c|c|c|c|c|}
\hline
$p$ size & \multicolumn{3}{c|}{256} &\multicolumn{3}{c|}{384}\\
\hline
\textbf{n} & \textbf{5} & \textbf{6} & \textbf{7}  & \textbf{7} & \textbf{8} & \textbf{9} \\
\hline
ratio 1 &\textbf{0.98} & 1.42 & 1.84 & \textbf{0.98} & 1.34 & 1.67  \\
\hline
ratio 2 & 0.14 & 0.20 & 0.26 & 0.19 & 0.25 & 0.31 \\
\hline
ratio 3 & 0.30 & 0.43 & 0.55 & 0.43 & 0.58 & 0.73 \\
\hline    
\end{tabular}
\\\\
\begin{tabular}{|l|c|c|c|c|c|c|}
\hline
$p$ size & \multicolumn{3}{c|}{521} \\
\hline
\textbf{n} & \textbf{10} & \textbf{11} & \textbf{12} \\
\hline
ratio 1 & \textbf{0.95} & 1.18 & 1.36 \\
\hline
ratio 2 & 0.25 & 0.29 & 0.34 \\
\hline
ratio 3 & 0.56 & 0.69 & 0.80 \\
\hline    
\end{tabular}
\end{tabular}
\vspace{2mm}

\footnotesize
 ratio 1: AMNS/OpenSSL Montgomery modular mult. \\
 ratio 2: AMNS/OpenSSL default modular mult.\\
 ratio 3: AMNS/GNU MP mult. + modular reduction
\caption{\label{tab:perf}Relative performances of AMNS vs GNU MP and OpenSSL modular multiplications, with $n$ equals to $n_{opt}$, $n_{opt} + 1$ and $n_{opt} + 2$ for the AMNS.}
\end{table}

As it can be seen in table \ref{tab:perf}, the AMNS performs modular multiplication much more efficiently than the librairy GNU MP and the default method in OpenSSL for all values of $n$ in the table. We can also observe that AMNS modular multiplication is slightly faster than the Montgomery method in OpenSSL when the value of $n$ is optimal.\\
\noindent\\
Table \ref{tab:perf} gives the mean ratio for each pair \textit{(size of $p$, n)}. In table \ref{tab:perf-best}, we give the best ratios obtained when $n = n_{opt}$ for AMNS. These ratios are obtained when $\lambda$ is very small with good shape. For example, $\pm 1$ or $\pm 2$. \\

For size of $p$ and $n$ equals to 224 and $4$ respectively, it can be observed that the obtained ratio overperforms the others ratios. In this case, to represent an integer of $\mathbb{Z}/p\mathbb{Z}$, AMNS, GNU MP and OpenSSL use the same number of $k$-bits blocks. As elements are polynomials in AMNS, there is no carry to manage. Thus, arithmetic operations are a lot faster. In the other cases, AMNS use at least one more block than GNU MP and OpenSSL.

\begin{remark}
When the size of $p$ is 521 bits, the optimal value for $n$ is 9. With our constraints on $k$, $\phi$, $\lambda$ and the timer we used to compute a nth-root modulo $p$ of $\lambda$, we did not find an AMNS with $n = 9$. So, for this size, ratios where computed with $n$ equals to $n_{opt} + 1$, $n_{opt} + 2$ and $n_{opt} + 3$.
\end{remark}

\begin{table}[h]
\footnotesize
\begin{tabular}{ll}
\begin{tabular}{|l|c|c|c|}
\hline
($p$ size, $n$) & (192, 4) & (224, 4) & (256, 5)\\
\hline
ratio 1 & \textbf{0.77} & \textbf{0.56} & \textbf{0.91} \\
\hline
ratio 2 & 0.09 & 0.08 & 0.13 \\
\hline
ratio 3 & 0.19 & 0.16 & 0.28 \\
\hline    
\end{tabular}
\\\\
\begin{tabular}{|l|c|c|}
\hline
($p$ size, $n$) & (384, 7) & (521, 10) \\
\hline
ratio 1 & \textbf{0.92}  & \textbf{0.91} \\
\hline
ratio 2 & 0.18 & 0.24 \\
\hline
ratio 3 & 0.40 & 0.54 \\
\hline    
\end{tabular}
\end{tabular}
\vspace{2mm}

\footnotesize
 ratio 1: AMNS/OpenSSL Montgomery modular mult. \\
 ratio 2: AMNS/OpenSSL default modular mult.\\
 ratio 3: AMNS/GNU MP mult. + modular reduction
\caption{\label{tab:perf-best}Relative performances of AMNS vs GNU MP and OpenSSL modular multiplications, with $n$ equals to $n_{opt}$ for the AMNS (best ratios).}
\end{table}

Moreover, in our implementations, we did not take advantage of the high parallelisation capability of AMNS, which should make arithmetic operations much faster. This parallelisation capability comes from the polynomial structure of the elements in AMNS. \\
In fact, using parallelisation, one could divide by $n$ the execution time of 
line 1 of algorithm \ref{alg:amns_mult} as the coefficients of the result $V$ can be computed independently in the same time. Likewise, the same thing can be done with all the lines of the RedCoeff method (algorithm \ref{alg:redcoeff}), which is called in algorithm \ref{alg:amns_mult}. 

An AMNS addition should always be faster because it is simply a polynomial addition without carries to manage, unlike classic binary representations. It should even be better with parallelisation.\\

\begin{remark}
Although, AMNS is a lot faster than GNU MP and OpenSSL default method for modular multiplication, the most relevant ratio in table \ref{tab:perf} is ratio1 because AMNS requires roughly the same amount of data to precompute than the Montgomery modular multiplication.
\end{remark}

\begin{table}[h]
\renewcommand{\arraystretch}{1.4}
\begin{center}
\small
\begin{tabular}{|l|c|c|c|c|c|}
\hline
Size in bits of $p$ & 192 & 224 & 256 & 384 & 521 \\
\hline
AMNS & 4 & 4 & 5 & 7 & 10 \\
\hline
GNU MP (mpz\_t) & 4 & 5 & 5 & 7 & 10 \\
\hline
OpenSSL (BIGNUM) & 5 & 6 & 6 & 8 & 11 \\
\hline
\end{tabular}
\end{center}
\caption{\label{tab:mem}Number of 64 bits words used to store elements of $\mathbb{Z}/p\mathbb{Z}$, where $n$ equals to $n_{opt}$ for the AMNS.}
\end{table}

In table \ref{tab:mem}, it can be observed that the memory consumptions of AMNS and GNU MP are most of the time the same while OpenSSL consumes more memory.\\Notice that for AMNS and OpenSSL Montgomery, some data must be precomputed, but the memory consumptions of these data are negligible compared to the overall usage of memory when performing multiple arithmetic operations.

\begin{remark}
In special cases where the polynomials $M$ and $M'$ of an AMNS have many coefficients equal to zero, the RedCoeff (algorithm \ref{alg:redcoeff}) method becomes very fast. This leads to a very efficient modular multiplication process.
As an example, for the recommended NIST prime $p = 2^{521} - 1$, we found such an AMNS (see Appendix \ref{annex:nist-521} for its parameters). 
With this AMNS, using the same inputs for $2^{25}$ iterations of modular multiplications, we obtained interesting timings that are collected in table \ref{tab:perf2}.
\begin{table}[h]
\begin{center}
%\small
\begin{tabular}{ll}
\begin{tabular}{|c|c|c|c|c|c|}
\hline
\multicolumn{3}{|c|}{$p = 2^{521} - 1$, $n = 10$} \\
\hline
ratio1 & ratio2 & ratio3 \\
\hline
0.42 & 0.09 & 0.27 \\
\hline    
\end{tabular}
\end{tabular}
\end{center}
\vspace{1mm}
\small
 ratio 1: AMNS/OpenSSL Montgomery modular mult. \\
 ratio 2: AMNS/OpenSSL default modular mult.\\
 ratio 3: AMNS/GNU MP mult. + modular reduction
\caption{\label{tab:perf2}Example of relative performances of AMNS vs GNU MP and OpenSSL modular multiplications when reduction polynomials are sparse}
\end{table}

This efficiency for this specific modulus is due to the fact that the two polynomial multiplications by $M$ and $M'$ in the RedCoeff method are very cheap as these polynomials have most of their coefficients equal to zero.\\
So, independently of the value of $n$, the sparser are $M$ and $M'$, the better are the performances.
\end{remark}

\subsection{About side channel attacks}
\label{subsec:sca}
%\subsubsection{Some facts.}
AMNS have very interesting properties regarding side channel attacks. 
\subsubsection{Regular algorithms.}
All the described algorithms (conversions, addition and multiplication) contain no conditional branching which is one of the basic weaknesses used in some side channel attacks, like the simple power analysis (SPA). 
\subsubsection{Many AMNS for  a given prime.}
Given a tuple $(p, n, \lambda)$, if a nth-root (modulo $p$) $\gamma$ of $\lambda$ exists, we know from proposition \ref{prop-gmm} that the total number of such roots is gcd$(n, p-1)$ and each of these roots allows to build at least one AMNS for the same prime $p$.
Moreover, corollaries \ref{coro:lmbd} and \ref{coro:lmbd1} show that for any pair $(p, n)$, it is always possible to choose $\lambda \in \mathbb{Z}$, and easily find $\gamma$, such that $\gamma^n \pmod p \equiv \lambda$. This means that it is always possible to generate many AMNS, given a prime $p$.
In fact, with our implementation, we were able to generate thousands of AMNS for many modulus using restrictive conditions (see table \ref{tab:nb-amns} in example \ref{expl:nb-amns}). This property is very interesting because, most of the time, side channel attacks use patterns (and hypothesis) to find secret data. With many AMNS for a given modulus, it becomes difficult to build (or find) such patterns as it is expected that any AMNS will have its specific behaviour for arithmetic operations. 
\subsubsection{Unpredictable shape of a representative.}
Finally, the difference between the representatives of an element from one AMNS to another will make side channel attacks much more complex.
As an example, we generated three AMNS for the prime $p = 2^{255} + 95$, and considered the representatives of the element \\ $t = 4D9B499C5B883B0F11752FBEED0684B6972\break F588DB67810835002A07C2F2AC804 \in \mathbb{Z}/p\mathbb{Z}$ in these AMNS; see Appendix \ref{annex:amns-ex-sm-p} for their parameters.\\\\
In AMNS 1, a representative of $t$ is: $X^3$. \\\\
In AMNS 2, a representative of $t$ is: \\
$-0x39CDC4224C412.X^4 - 0x3F60F0AC55927.X^3 + 0x7BD09DBD01E4.X^2 - 0x4B844B52F420E.X -\break 0x2DE4B18019BCF$.\\\\
In AMNS 3, a representative of $t$ is: \\
$-0x152DDD219CC.X^5 + 0x3080218E9FD.X^4 - \break 0x225BF4D6DE9.X^3 +0x672A6C1F62E.X^2 - \break 0x3E6242D6F01.X + 0x7DEE4F8F11E$. \\\\
As it can be observed, these representatives of $t$ vary a lot. This behaviour should make side channel attacks much more complex as it will introduce some kind of randomness in power consumptions for example.
\subsubsection{A simple DPA countermeasure for ECC.}
%\subsubsection{Easily DPA immune in ECC context}
In \cite{joye-timen}, the authors show how to randomize the base point $P$
to thwart DPA attacks. The main idea is to change the $(x,y)$ coordinates of $P$ by $(u^{-2}x,u^{-3}y)$ for a random $u\in(\mathbb{Z}/p\mathbb{Z})^\times$. Such a countermeasure can easily be implemented in the AMNS conversion procedure.
The conversion algorithms (from binary representation to AMNS and vice-versa) are modified to take an extra argument used to change the representative of $x$ and $y$ (see algorithms \ref{alg:dpa_conv_to_amns} and \ref{alg:dpa_conv_from_amns}). Hence before the computation of $kP$, the procedures DPA\_Conv\_2\_AMNS($x$,$u^{-2}$) and DPA\_Conv\_2\_AMNS($y$,$u^{-3}$) are called. Once the computation done a call to DPA\_Conv\_2\_BIN($x$,$u^{2}$) and DPA\_Conv\_2\_BIN($y$,$u^{3}$) allows to find back the coordinates of $kP$.
\begin{algorithm}
  \caption{DPA\_Conv\_2\_AMNS(a,$\beta$)}
  \label{alg:dpa_conv_to_amns}
  \begin{algorithmic}[1]
    \Require $a \in \mathbb{Z}/p\mathbb{Z}$ \, $\mathcal{B} = (p, \, n, \, \gamma, \, \rho, E)$, and $\beta\in(\mathbb{Z}/p\mathbb{Z})^{\times}$
    \vspace{1mm}
    \Ensure $A \equiv (a\beta.\phi)_\mathcal{B}$
    \vspace{1mm}
    \State $b = (a\beta.\phi^2) \, \text{mod}\, p$
    \vspace{1mm}
    \State $b = (b_{n-1}, ..., b_0)_\rho$
	\vspace{1mm}    
    \State $U \leftarrow \sum\limits_{i=0}^{n-1} b_i.P_i(X)$
    \vspace{1mm}
    \State $A \leftarrow \text{RedCoeff}(U)$
    \vspace{1mm}
    \State return $A$
  \end{algorithmic}
\end{algorithm}
\begin{algorithm}
  \caption{DPA\_Conv\_2\_BIN(A,$\beta$)}
  \label{alg:dpa_conv_from_amns}
  \begin{algorithmic}[1]
    \Require $A \in \mathcal{B}$, $\mathcal{B} = (p, \, n, \, \gamma, \, \rho, E)$ and  $\beta\in(\mathbb{Z}/p\mathbb{Z})^{\times}$
    \vspace{1mm}
    \Ensure $a = A(\gamma).\phi^{-1} \, \text{mod}\, p$
    \vspace{1mm}
    \State $A \leftarrow \text{RedCoeff}(A)$
	\vspace{1mm} 
	\State $a \leftarrow 0$
	\vspace{1mm}     
    \For{$i=n-1 \dots 0$}
    \vspace{1mm}
    \State $a \leftarrow (a\gamma + A_i) \, \text{mod}\, p$
    \vspace{1mm}
    \EndFor
    \vspace{1mm}
    \State return $a\beta\ \text{mod}\, p$
  \end{algorithmic}
\end{algorithm}

\section{Conclusion}
In this paper, we generalized some results in \cite{amns_12} to a bigger set of polynomials $E$. We presented a complete set of algorithms for arithmetic and conversion operations in the AMNS and shown how to generate all parameters needed for these algorithms. Our implementations have shown that AMNS allow to perform modular operations more efficiently than well known librairies like GNU MP and OpenSSL. They can even be much more efficient if their high parallelisation capability is used.
Finally we brought some arguments and elements to  point out that
 AMNS should be considered as a potential countermeasure in the context of side channel attacks.
 
% ---- Bibliography ----
%
% BibTeX users should specify bibliography style 'splncs04'.
% References will then be sorted and formatted in the correct style.
%
\bibliographystyle{splncs04}
\bibliography{biblio}

\vspace{5mm}

\appendix

\section{Example of AMNS for the NIST recommended 521-bits prime integer for ECC}
\label{annex:nist-521}
\begin{itemize}
	\item $p = 2^{521} - 1$
	\item $n = 10$
	\item $\lambda = 2$
	\item $\rho = 2^{58}$
	\item $\gamma = 2^{469}$
	\item $E(X) = X^{10} - 2$
	\item $M(X) = 2^{52}.X - 1$
	\item $M'(Y) = 2^{52}.Y + 1$
\end{itemize}

\section{List of prime numbers used for table \ref{tab:nb-amns}}
\label{annex:prime-list}
\begin{itemize}
\item $p192 = 0xE06F20509A52674228D4F0701A08EB3B08C1\\714F0A93F719$ \\
\item $p224 = 0xE886C555B533B33B037F4F356CB97E00B56\\0DD1B5A9C252CCEAF301B$ \\
\item $p256 = 0x8FFB5E3E4BD153C220C28FDBA587F9C23\\D454DBE31C17D0B44462E26684B46E5$ \\
\item $p384 = 0xF3D1CD992E8EA43D29612F131C05A03215F\\247E92951AB3D741FEA820526FD185CDBEC7AEFC3\\1F75BEA2D2F4F43D1547$ \\
\item $p521 = 0x15683E5BD61DA4E3A10A95DE122E3B015F\\AC3F355F6360F33FA19D036CA02897BAF3D615ADA\\F6508A1E5B325B0345F39505A7B84ED01A8F913CA0\\D6395A9E135BE3$
\end{itemize}

\section{Examples of AMNS for different primes}
\label{annex:amns}
In this section, we give some examples of the AMNS we used in subsection \ref{subsec:pr-perf-mem} for our numerical experimentation.
All these AMNS have the common parameter $\phi = 2^{64}$.
\subsection{AMNS 1: 192-bits prime number.}
\begin{itemize}
	\item $p = 0xE06F20509A52674228D4F0701A08EB3B08C171\\4F0A93F719$
	\item $n = 4$
	\item $\lambda = -1$
	\item $\rho = 2^{51}$
	\item $\gamma = 0x7AB09A124AA5065B2E20034E0D0FE3D0A\\
	5F2A276C33E2515$
	\item $E(X) = X^4 + 1$
	\item $M(X) = 0x4B3D12868945.X^3 - 0x924097D431D8.X^2 + 0x39B561D62725.X + 0xC580DC0A05E3$
	\item $M'(Y) = 0x6E2B6D9BAF275F4F.Y^3 + \\
	0x8F59D05762288B18.Y^2 + 0x69A1F846105E39CF.Y \\
	+ 0xBEDE53CF67CF2747$
\end{itemize}

\subsection{AMNS 2: 224-bits prime number.}
\begin{itemize}
	\item $p = 0xE886C555B533B33B037F4F356CB97E00B560\\DD1B5A9C252CCEAF301B$
	\item $n = 4$
	\item $\lambda = -2$
	\item $\rho = 2^{60}$
	\item $\gamma = 0x64892FE7A2B9E28E496952B025FE138C22382\\6010F31C90E9354AFEF$
	\item $E(X) = X^4 + 2$
	\item $M(X) = -0x6A2300C9FAC40E.X^3 \\
	- 0xE12EC6DCB579A6.X^2 - 0x272839DE2E827E.X \\
	- 0x43419ADAFCFB61$
	\item $M'(Y) = 0x7D4F705603D9CE42.Y^3 \\
	+ 0xE0922181D0445FA6.Y^2 + 0x5A4FA29325678B32.Y \\
	+ 0xDDDE890AB0458D59$
\end{itemize}

\subsection{AMNS 3: 256-bits prime number.}
\begin{itemize}
	\item $p = 0x8FFB5E3E4BD153C220C28FDBA587F9C23D454\\DBE31C17D0B44462E26684B46E5$
	\item $n = 5$
	\item $\lambda = 2$
	\item $\rho = 2^{55}$
	\item $\gamma = 0x42559355ED8CAAA92688CE0A9322458EE4372\\4D997327755F385B1901F25E507$
	\item $E(X) = X^5 - 2$
	\item $M(X) = -0x7F360937497B.X^4 - 0x45FB30302B149.X^3 - 0x1910C5989E6B8.X^2 - 0x28750BDCB9CA3.X +\\ 0x3935AF11550E5$
	\item $M'(Y) = 0x6AC1B8BE18685FC6.Y^4 + \\0x1E8123E1FA66C4B2.Y^3 + 0x5C7430F9C82014D1.Y^2 + 0x33A24848D6BF6427.Y + 0xCC7C0CE54B67A803$
\end{itemize}

\subsection{AMNS 4: 384-bits prime number.}
\begin{itemize}
	\item $p = 0xF3D1CD992E8EA43D29612F131C05A03215F24\\7E92951AB3D741FEA820526FD185CDBEC7AEFC31\\F75BEA2D2F4F43D1547$
	\item $n = 7$
	\item $\lambda = 2$
	\item $\rho = 2^{59}$
	\item $\gamma = 0xA5C4FB2BBF7D447D0E58D14E3F440AD5C7\\A0BB773BCFA856914ED875B1A8B3DD5C6327E24B\\34890BDA7782DE3050EEC4$
	\item $E(X) = X^7 - 2$
	\item $M(X) = 0x2B70420C25B6F9.X^6 + \\0x27597E8FAEFBA6.X^5 + 0x2A259AA4E719E1.X^4 + \\0x12391F5D00D4A7.X^3 - 0x26AC55039EACFD.X^2 + \\0x2747CE657C0F2D.X - 0x426A85C33ACE17$
	\item $M'(Y) = 0x36E06AB70DC02E0C.Y^6 + \\0x91EC3470F30AB1DD.Y^5 + 0x521BCB522168C88C.Y^4 + 0x51579EF6AC4A01C8.Y^3 + 0x7145B435BA15791A.Y^2 + 0xCCD28607261C6227.Y + 0x4E6A294F1FBE2093$
\end{itemize}

\subsection{AMNS 5: 521-bits prime number.}
\begin{itemize}
	\item $p = 0x15683E5BD61DA4E3A10A95DE122E3B015FAC\\3F355F6360F33FA19D036CA02897BAF3D615ADAF6\\508A1E5B325B0345F39505A7B84ED01A8F913CA0D6\\395A9E135BE3$
	\item $n = 10$
	\item $\lambda = -2$
	\item $\rho = 2^{57}$
	\item $\gamma = 0x3BEB85F1AC84420C044C472B8845A1896C68A\\CD6C78773C9392B6CE871027BD5C333EF238A11733\\384E0A7318139218D99ADDCBB39694C1207938B6CA\\6789BC3B1$
	\item $E(X) = X^{10} + 2$
	\item $M(X) = -0x3D52F259CF52C.X^9 - \\0x2F155A2F83CC6.X^8 + 0x3C5398A0AA3D2.X^7 - \\0x6161944D2155C.X^6 + 0x92266960FE012.X^5 - \\0x68DFAA2817992.X^4 - 0x996D8B98C7860.X^3 - \\0x31E83951B9F38.X^2 + 0x3E716C4C0B2A4.X +\\ 0x3304421CB90FD$
	\item $M'(Y) = 0xBA9CFB5216CEA3CC.Y^9 + \\0x4DD219C801C0DD06.Y^8 + 0x10DEC022F71CC8F2.Y^7 + 0x199161BB290DEE2C.Y^6 + 0x924D10687452E482.Y^5 + \\0x7F6A883FEED1B396.Y^4 + 0x6923B242682C1CA0.Y^3 + 0x76FA75CEF1B36AC8.Y^2 + 0xBD1EDFD16FA95474.Y + 0xC7E79022CD8CD813$
\end{itemize}

\section{Examples of AMNS for the same prime}
\label{annex:amns-ex-sm-p}
Common parameters: 
\begin{itemize}
	\item $p = 2^{255} + 95$
	\item $\phi = 2^{64}$
\end{itemize}
\subsection{AMNS 1.}
\begin{itemize}
	\item $n = 5$
	\item $\lambda = 2$
	\item $\rho = 2^{55}$
	\item $\gamma = 0x4A11EC963214E75587B184AF9B09E8871D0DF599\\1483661DE2FF6BB1E251199C$
	\item $E(X) = X^5 - 2$
	\item $M(X) = -0x28AE865829ED0.X^4 -\\ 0x3B47735E8CB55.X^3 + 0x1337D2969BC11.X^2 - \\0x46647D3BC6C24.X + 0x2B2A32D7CA88B$
	\item $M'(Y) = 0x8FAC8FCFD4AA8587.Y^4 +\\ 0xB10CC0B3B58C223.Y^3 + 0xCA57491651A44CBB.Y^2 + 0x4E154808B257394C.Y + 0x7C5906E698B85DD$
\end{itemize}
\subsection{AMNS 2.}
\begin{itemize}
	\item $n = 5$
	\item $\lambda = -3$
	\item $\rho = 2^{56}$
	\item $\gamma = 0x1EBF5A56EC92F9F46C7F0870E5E3702D3E83\\83DEAF56E4B4C3D368BD0BF3BD40$
	\item $E(X) = X^5 + 3$
	\item $M(X) = -0x258AA3DB7ADC.X^4 -\\ 0x1C961F979254D.X^3 + 0x1D9EAFCB6057C.X^2 -\\ 0x3CE080AECD314.X - 0x539D41F2093E8$
	\item $M'(Y) = 0x6F5067DF289E2148.Y^4 +\\ 0x4D82701329D99964.Y^3 + 0x1194DEB36C42D649.Y^2 + 0x823B9BE066BDC6EC.Y + 0x1B04ECB8AF0D910C$
\end{itemize}
\subsection{AMNS 3.}
\begin{itemize}
	\item $n = 6$
	\item $\lambda = 2$
	\item $\rho = 2^{47}$
	\item $\gamma = 0x27CDB601B497003ABAE910DB0E0031133262F7\\B71DA49112F58965B98FC4930D$
	\item $E(X) = X^6 - 2$
	\item $M(X) = 0x1BDD53B3E8.X^5 +\\ 0x3817CA92D94.X^4 - 0x365442524AC.X^3 +\\ 0xE287722432.X^2 - 0x2846C0EFDE3.X +\\ 0x322D02D3281$
	\item $M'(Y) = 0xA2133E675175BDD3.Y^5 +\\ 0xB61E25558E691783.Y^4 + 0x89E8235276FDFBCB.Y^3 + 0x59D0C2AAB489D4AF.Y^2 + 0xF4592CC3EF0FD023.Y + 0xD176E217D4A7BD85$
\end{itemize}

%\newpage
\section{Integers structures in GNU MP and OpenSSL}
\label{annex:memory}
In this section, we give the integer structures in GNU MP and OpenSSL. These are the structures we used to compute memory cunsomptions in table \ref{tab:mem}. \\\\
GNU MP mpz\_t structure : 
\begin{verbatim}
typedef struct
{
  int _mp_alloc;	  /* Number of *limbs* allocated and pointed to by the '_mp_d' field. */
  int _mp_size;		   /* abs(_mp_size) is the number of limbs the last field points to. If _mp_size is negative this 
                      is a negative number. */
  mp_limb_t *_mp_d;		/* Pointer to the limbs. */
} __mpz_struct;
\end{verbatim}
In GNU MP mpz\_t structure, there are 2 integers of type \textit{int} and an array of type \textit{mp\_limb\_t}. On the computer we used for our tests (see features at subsection \ref{subsec:pr-perf-mem}), \textit{int} is 32 bits wide and \textit{mp\_limb\_t} is 64 bits wide. In the computation of memory consumption, we considered the 2 integers of type \textit{int} as one integer of 64 bits. For more details, see : \url{https://gmplib.org/manual/Integer-Internals.html}. \\\\
OpenSSL bignum\_st structure : 
\begin{verbatim}
struct bignum_st
        {
        BN_ULONG *d; 	/* Pointer to an array of 'BN_BITS2' bit chunks. */
        int top;        /* Index of last used d +1. */
        /* The next are internal book keeping for bn_expand. */
        int dmax;       /* Size of the d array. */
        int neg;        /* one if the number is negative */
        int flags;
        };
\end{verbatim}
In OpenSSL bignum\_st structure, there are 4 integers of type \textit{int} and an array of type \textit{BN\_ULONG}, which is 64 bits wide (on our computer). In memory consumption computation, we considered the 4 integers of type \textit{int} as two 64-bit integers.  For more details, see : \url{https://www.openssl.org/docs/man1.0.2/crypto/bn\_internal.html}

\end{document}